\newcommand{\blind}{1}
\newtheorem{theorem}{Theorem}[section]
\newtheorem{definition}[theorem]{Definition}
\begin{document}


\def\spacingset#1{\renewcommand{\baselinestretch}%
{#1}\small\normalsize} \spacingset{1}


\if1\blind
{
  \title{\sf Deep Computerized Adaptive Testing}
   \author{Jiguang Li\footnote{Jiguang Li is a $4^{th}$-year doctoral student in Econometrics and Statistics at the Booth School of Business of the University of Chicago},\,\, Robert Gibbons\footnote{Robert Gibbons is the Blum-Reise Professor of Statistics at the University of Chicago}\,\, and  Veronika Ro\v{c}kov\'{a}\footnote{Veronika Ro\v{c}kov\'{a} is the Bruce Lindsay Professor of Econometrics and Statistics in the Wallman Society of Fellows}}
  \maketitle
} \fi

\if0\blind
{
   \title{\sf Deep Computerized Adaptive Testing}
  \maketitle
  \medskip
} \fi

\begin{abstract}
Computerized adaptive tests (CATs) play a crucial role in educational assessment and diagnostic screening in behavioral health. Unlike traditional linear tests that administer a fixed set of pre-assembled items, CATs adaptively tailor the test to an examinee’s latent trait level by selecting a smaller subset of items based on their previous responses. Existing CAT applications predominantly rely on item response theory (IRT) models with a single latent variable, a choice driven by both conceptual simplicity and computational feasibility. However, many real-world item response datasets exhibit complex, multi-factor structures, limiting the applicability of CATs in broader settings. In this work, we develop a novel CAT system that incorporates multivariate latent traits, building on recent advances in Bayesian sparse multivariate IRT \cite{doi:10.1080/01621459.2025.2476786}. Our approach leverages direct sampling from the latent factor posterior distributions, significantly accelerating existing information-theoretic item selection criteria by eliminating the need for computationally intensive Markov Chain Monte Carlo (MCMC) simulations. Recognizing the potential suboptimality of existing item selection rules which are often based on myopic one-step-lookahead optimization of some information-theoretic criterion, we propose a double deep Q-learning algorithm to learn an optimal item selection policy. Through simulation and real-data studies, we demonstrate that our approach not only accelerates existing item selection methods but also highlights the potential of reinforcement learning in CATs. Notably, our Q-learning-based strategy consistently achieves the fastest posterior variance reduction, leading to earlier test termination. In finite-horizon settings, it also yields final posterior distributions that more closely approximate the oracle posterior distribution—the posterior that would be obtained if test takers had responded to all items in the item bank.
\end{abstract}

\noindent%
{\it Keywords:}  Reinforcement Learning; Multidimensional Item Response Theory; Adaptive Data Collection;  Deep Q-Learning;  
\vfill

\newpage
\spacingset{1.8}

\section{Introduction}
Multidimensional Computerized Adaptive Testing (MCAT) has revolutionized the field of educational and psychological assessments by dynamically selecting tailored items from a large test pool, thereby enhancing the efficiency and precision of latent ability estimates \citep{van2010elements}. Powered by Multidimensional Item Response Theory (MIRT) \citep{bock2021item}, MCAT leverages multidimensional statistical inference to evaluate respondents' multidimensional latent traits, and allows for more comprehensive and efficient assessments compared to unidimensional approaches. MCAT’s adaptability and accuracy are also particularly crucial in high-stakes diagnostic assessments such as in clinical psychology and psychiatry, where it can be substituted for in-person assessments by clinical professionals especially in areas with limited medical resources \citep{gibbons2016computerized}.  

A substantial body of research in MCAT has focused on item selection strategies derived from experimental design principles. One prominent strategy is to select items that maximize the determinant of the Fisher information matrix evaluated at the current estimates of the latent traits \citep{segall1996multidimensional, segall2000principles}, known as the D-optimality criterion. An alternative, the A-optimality criterion, aims to minimize the trace of the asymptotic covariance matrix, thereby reducing overall estimation variance \citep{a584ed95-55e4-36a9-8d67-ff8dd3f06250}. In the absence of nuisance latent abilities, both A-optimality and D-optimality have demonstrated superior accuracy relative to other common experimental design criteria \citep{mulder2009multidimensional}.

Another prominent line of MCAT item selection algorithms leverages Kullback-Leibler (KL) information, often within a Bayesian framework. A common approach is to select items that produce response distributions at the true latent trait value, $\bm{\theta}_0$, that differs maximally from the response distributions generated at the other value of $\theta$ \citep{Chang1996AGI, pekl}. Moving beyond response distributions alone, some researchers propose maximizing the KL divergence between the current posterior distribution and the posterior distribution at the next selection step, thereby enhancing adaptation through updated trait estimates \citep{kl-is}. Another promising strategy is the mutual information criterion, which aims to maximize entropy reduction of the current posterior distribution, encouraging more and more accurate posterior estimates \citep{mi_selection}. In particular, \cite{selection_survey} demonstrates both theoretical and empirical advantages of the Bayesian mutual information item selection rule over common experimental criteria such as D-optimality. More detailed theoretical comparison of KL information and Fisher information criteria is presented by \cite{Wang_Chang_Boughton_2011}. 

Although numerous effective item selection rules have been proposed in the MCAT literature, they all rely on one-step lookahead optimization of an information-theoretic criterion. Despite their ease of implementation and attractive theoretical properties, these selection rules are inherently myopic: they select items based solely on immediate information gain, ignoring how current choices influence future decisions, which can lead to suboptimal policies.
For example, existing methods tend to favor items with high loading parameters \citep{Chang2015}. However, CAT researchers also recommend reserving such items for the later stages of testing to improve efficiency \citep{doi:10.1177/01466219922031338}. Integrating heuristic guidance into existing selection rules remains challenging.

Addressing these limitations, we propose a novel deep CAT system that integrates a flexible Bayesian MIRT model with a non-myopic online item selection policy, guided by reinforcement learning (RL) principles \citep{Sutton1998}. Leveraging recent advancements in Bayesian sparse MIRT \citep{doi:10.1080/01621459.2025.2476786}, our framework seamlessly accommodates multiple latent factors with complex loading structures, while maintaining scalability in both the number of items and factors. To learn the optimal item selection policy that prioritizes the assessment of target factors, we draw on contemporary RL methodologies and introduce a general double deep Q-learning algorithm \citep{Mnih2015,van_Hasselt_Guez_Silver_2016}. This algorithm efficiently trains a Q-network offline using only item parameter estimates; the learned network can then be deployed online to select optimal items based on the current multivariate latent factor posterior distribution. When the test terminates, our framework robustly characterizes the full latent factor posterior distributions rather than providing only a point estimate.

A primary contribution of our work is to show how identification of the latent-factor posterior distribution leads to substantial computational gains in online item selection. Given that such posterior distribution is deemed to be non-Gaussian and unknown, traditional Bayesian methods typically rely on expensive MCMC simulations \citep{BeguinGlas2001} and combined with additional data augmentation to handle categorical likelihood \citep{f3928584-48a1-3ab1-b404-2ec9ffdf51bb, Polson01122013}. Our approach achieves substantial acceleration by directly sampling latent factor posterior distributions \citep{doi:10.1080/01621459.2025.2476786, Durante_2019, Botev_2016}. Notably, this improvement not only increases the efficiency of existing Bayesian item-selection procedures but also provides a computational foundation for training our proposed Q-learning algorithm through rapid, large-scale simulations of testing sessions.

Another critical advancement in our work is the integration of CAT within a reinforcement learning framework. This approach addresses the practical need to prioritize accurate estimation and to overcome known limitations of greedy item selection methods in sequential decision-making \citep{10.5555/560669}. Building on the Bayesian MIRT foundation, our neural-network architecture incorporates the identified posterior parameters as state variables, allowing the model to learn optimal item-selection policies through a large amount of testing simulations. This formulation bridges the two paradigms: the Bayesian component provides statistically grounded representations of examinee uncertainty, while the reinforcement-learning component leverages these representations to optimize sequential decisions. The trained neural network is deployable on standard laptops without GPU acceleration, making it suitable for online adaptive testing applications. The sequential nature of CAT aligns naturally with deep Q-learning methods, which have demonstrated remarkable success across diverse application domains \citep{Silver2016,Kalashnikov2018QTOptSD}. Notably, RL has been successfully employed in educational measurement settings to design personalized learning plans \citep{doi:10.3102/10769986221129847, doi:10.1177/0146621619858674}. 

Finally, our work aligns with the emerging research trend of framing traditional statistical sequential decision-making problems as optimal policy learning tasks \citep{modern_bad_review}. This perspective has appeared in Bayesian adaptive design \citep{bad_review,bad,Foster2021DeepAD} and Bayesian optimization \citep{srinivas2010gaussian,NIPS2016_5ea1649a}, where recent methods aim to move beyond one-step criteria toward policy learning that account for long-term consequences. Our contribution fits within this broader trend by providing a principled approach to cognitive and behavioral assessment.

The remainder of the paper is organized as follows. Section \ref{sec:motivation} motivates our deep CAT framework using a cognitively complex item bank from a recent clinical study \citep{cat-cog}. Section \ref{sec:problem-formulation} reviews existing information-theoretic item selection methods and outlines the necessity to reformulate CAT as a reinforcement learning task. Section \ref{sec:acceleration} introduces a general Bayesian framework that accelerates existing CAT item-selection rules and serves as the foundation for our Q-learning algorithm. Section \ref{sec:Q-rl} details our reinforcement learning approach, including the neural network architecture and double Q-learning algorithm. Finally, Sections \ref{sec:experiments} and \ref{sec:cat-cog} evaluate our method through both simulations and real data experiments. The implementations of the baseline Bayesian CAT algorithms, our Q-learning algorithm, and the driver codes to reproduce all the experiments in the paper can be found in \url{https://github.com/JiguangLi/deep_CAT}.

\section{Adaptive High-Dimensional Cognitive Assessment} \label{sec:motivation}

Our proposed deep CAT system is motivated by the growing need for adaptive cognitive assessment in high-dimensional latent spaces. Cognitive impairment, particularly Alzheimer’s dementia (AD), is a major public health challenge, affecting 6.9 million individuals in the U.S. in 2024 \citep{alzheimers2024facts}. Early detection of mild cognitive impairment (MCI), a precursor to AD, is crucial for slowing disease progression and improving patient outcomes \citep{10.1001/jamanetworkopen.2018.1726}. However, traditional neuropsychological assessments are costly, time-consuming, and impractical for frequent use, underscoring the need for more efficient and scalable assessment methods.

Recently, pCAT-COG, the first computerized adaptive test (CAT) item bank based on MIRT for cognitive assessment, demonstrated its potential as an alternative to clinician-administered evaluations \citep{cat-cog}. The data were collected from $730$ participants. After careful item calibration and model selection, the final item bank consisted of $J=57$ items covering five cognitive subdomains: episodic memory, working memory, executive function, semantic memory, and processing speed.  Since each item comprised three related tasks, we used a binary score, where $1$ indicates correct answers on all tasks for our analysis. 

Following \cite{cat-cog}, we fit a six-factor bifactor model \citep{bifactor} to the $730$ by $57$ binary response matrix, with one primary factor representing the global cognitive ability and five secondary factors. This yields a $57$ by $6$ factor loading matrix, visualized in Figure \ref{fig:cat-cog-loadings}, where rows correspond to pCAT-COG items and columns represent distinct factors. 

\begin{figure}[t]
    \centering
    \includegraphics[width=0.65\textwidth, height=0.5\textwidth]{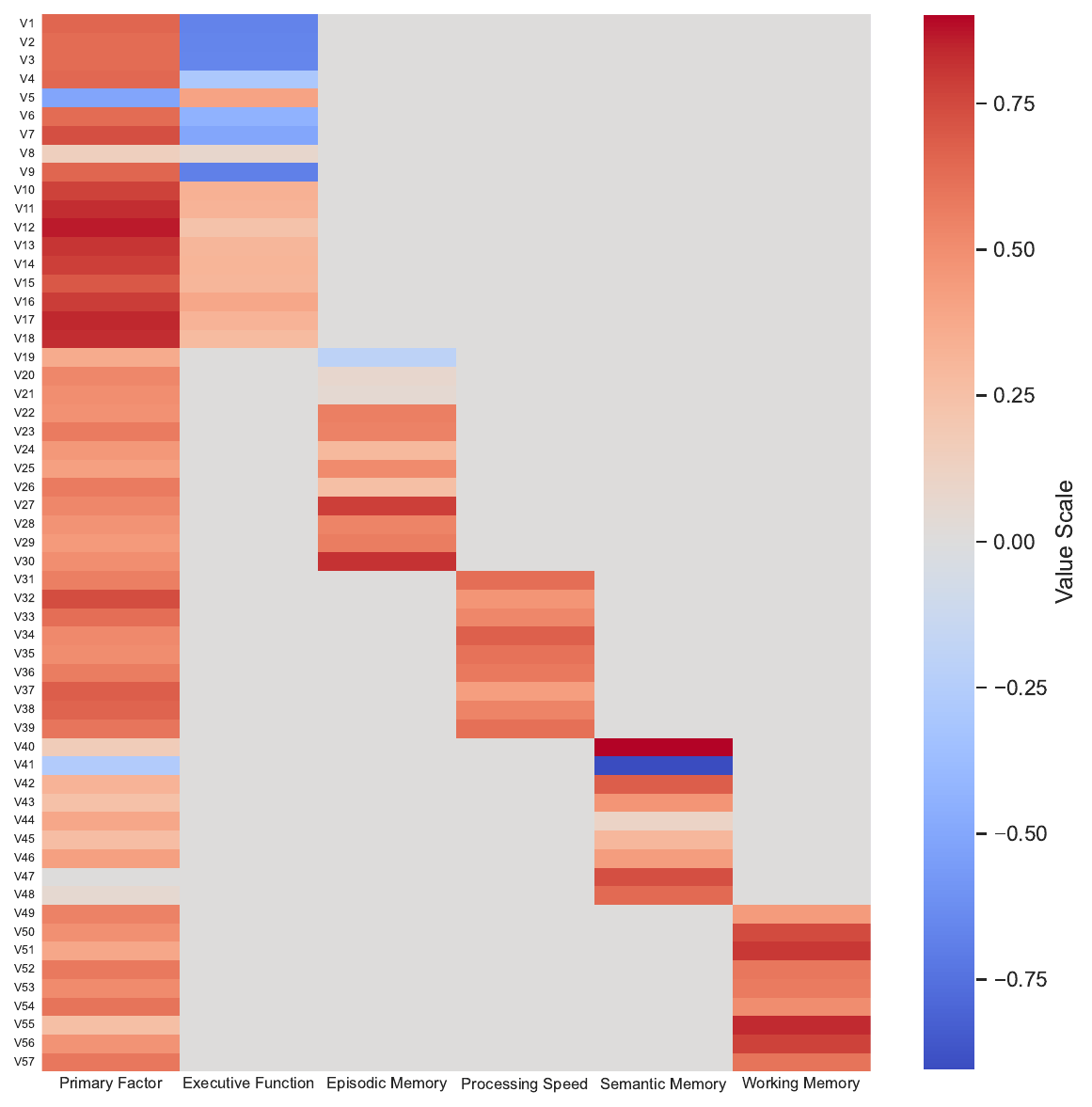}
    \caption{Estimated Bifactor Factor Loading Matrix for pCAT-COG}
    \label{fig:cat-cog-loadings}
\end{figure}

Since items in pCAT-COG are primarily designed to measure the general cognitive factor and only partially capture subdomain information, our goal is to develop an item selection strategy that efficiently estimates the primary factor (first column) using as few items as possible while maintaining robustness to subdomain influences. Additionally, the selection algorithm must be computationally efficient to navigate a six-dimensional latent space in real time for seamless interactive testing. Online item selection is critical, as learning an optimal sequence from $57$ items offline presents an intractable combinatorial problem, even in a binary response setting.

Adaptive cognitive assessment is essential for cognitive assessment, as designing items is costly and administering all items is time-consuming. Our deep CAT system requires on average $11.2$ items to reduce the posterior variance of the primary factor from $1$ to $0.16$ (posterior s.d. $= 0.4$), whereas the next-best mutual information method requires $12.6$ items to achieve the same precision. The item bank is currently expanding to $500$ items with a more nationally representative sample. Success on this prototype dataset paves the way for broader deployment in clinical research.

\section{From One-Step Optimization to Policy Learning} \label{sec:problem-formulation}
We formally define the problem of designing a CAT system from a reinforcement learning (RL) perspective. Typical CAT systems consist of two components:

\begin{itemize}
    \item \textbf{Offline Calibration}: An MIRT model is fitted to a calibrated dataset $\bm{Y} \in \mathbbm R^{N \times J}$ to estimate the item characteristic parameters, where $N$ represents the number of examinees, and $J$ represents the number of items in the item bank. 
    \vspace{-0.1cm}
    \item \textbf{Online Deployment}: Given the estimated item parameters, an item selection algorithm is deployed online to adaptively select items for future examinees.
\end{itemize}

 The performance of CAT can be measured by the number of items required to estimate an examinee’s latent traits with sufficient precision.

\subsection{Notation and Problem Formulation} \label{subsec:notations}

For the entirety of the paper, we assume the calibration dataset $\bm{Y}$ is binary, where each element $y_{ij}$ represents whether subject $i$ answered item $j$ correctly. We consider a general two-parameter MIRT framework with $K$ latent factors \citep{bock2021item}. Let $\bm{B} \in \mathbbm R^{J \times K}$ denote the factor loading matrix, and $\bm{D} \in \mathbbm R^J$ denote the intercept vector. {\textcolor{black}{Throughout the paper, boldface notation is used to denote vectors and matrices (e.g., $\bm{\theta}, \bm{B}, \bm{D}$), while scalar quantity such as $D_{j}$ is written in standard font}}. For each examinee $i$ with multivariate latent trait $\bm{\theta}^{(i)} \in \mathbbm R^K$, the data-generating process for $y_{ij}$ is given by:
\begin{equation} \label{eq:probit-mirt}
    y_{ij} \mid \bm{\theta}^{(i)}, \bm{B}_j, D_j \sim \text{Bernoulli}(\Phi(\bm{B}_j'\bm{\theta}^{(i)} + D_j)),
\end{equation}
where $\bm{B}_j$ is the $j$-th row of $\bm{B}$, $D_j$ is the $j$-th entry of $\bm{D}$, and $\Phi(\cdot)$ denotes the standard normal cumulative distribution function. The item parameters can be compactly expressed as $\{\bm{\xi}_j\}_{j=1}^J := \{(\bm{B}_j, D_j)\}_{j=1}^J$. This two-parameter MIRT framework is highly general, imposing no structural constraints on the loading matrix and requiring no specific estimation algorithms for item parameter calibration.

Given the estimated item parameters, we need to design an item selection algorithm that efficiently tests a future examinee with unobserved latent ability $\bm{\theta} \in \mathbbm R^K$. The sequential nature of the CAT problem makes Bayesian approaches particularly appealing. Without loss of generality, we assume a standard multivariate Gaussian prior on $\bm{\theta} \sim \mathcal{N}(0, \mathbbm I_K)$, and introduce the following notation:

\begin{itemize}
    \item For any positive integer $n$, let $[n]$ denote all the positive integers no greater than $n$. Let $j\in [J]$ be the index for the items in the item bank.

    \item Let $j_t$ denote the index of the item selected at step $t$ based on an arbitrary item selection rule. Define $\mathcal{I}_{t}$ as the set of the first $t$ administered items, and let $R_{t} := [J] \setminus \mathcal{I}_{t-1} $, the set of available items before the $t$-th item is picked.

    \item We use the shorthand notation $f(\bm{\theta} | \bm{Y}_{1:T})$ to represent the latent factor posterior distributions $f(\bm{\theta}| \bm{Y}_{1:T}, \bm{\xi}_{1:T})$ after $T$ items have been selected. Here, the response history is denoted by $\bm{Y}_{1:T}:= [y_{j_1}, \cdots, y_{j_T}]'$, and the item parameters are given by $\bm{\xi}_{1:T} := (\bm{B}_{1:T}, \bm{D}_{1:T})$, where $\bm{B}_{1:T}:=[\bm{B}_{j_1}, \cdots, \bm{B}_{j_T}]'$ and $\bm{D}_{1:T}:=[D_{j_1}, \cdots, D_{j_T}]'$. 
\end{itemize}

At time $(t+1)$, the algorithm takes the current posterior $f(\bm{\theta}|\bm{Y}_{1:t})$ as input and outputs the next item selection $j_{t+1}$. This process continues until at time $T'$, either when the posterior variance of $f(\bm{\theta}|\bm{Y}_{1:T'})$ falls below a predefined threshold $\tau^2$, or when $T'=H$, where $H \leq J$ is the maximum number of items that can be administered. Hence the goal of CAT is to minimize $T'$ given the estimated item parameters of the item bank.

\subsection{Reviews of KL Information Item Selection Rules} \label{subsec:existing_rules}

Given that our proposed deep CAT system is built on a general Bayesian MIRT framework \citep{doi:10.1080/01621459.2025.2476786}, we briefly revisit common Bayesian item selection rules and show in Section \ref{subsec:accelerate_benchmarks} how our framework can be used to accelerate these baseline methods. The popular KL Expected A Priori (EAP) rule selects the $t$-th item based on the average KL information between response distributions on the candidate item at the EAP estimate $\hat{\bm{\theta}}_{t-1} = \int_{\bm{\theta}} \bm{\theta} f(\bm{\theta} | \bm{Y}_{1:(t-1)}) d\bm{\theta}$,  and random factor $\bm{\theta}$ sampled from the posterior distribution $f(\bm{\theta}| \bm{Y}_{1:(t-1)})$ {\textcolor{black} {\citep{pekl}}}: \vspace{-0.2cm}  
\begin{equation} \label{eq:kl}
    \arg\max_{j_t \in R_t} \int_{\bm{\theta}} \bigg\{ \sum_{l=0}^{1} P(y_{j_t}=l | \hat{\bm{\theta}}_{t-1}) \log \frac{ P(y_{j_t}=l | \hat{\bm{\theta}}_{t-1})}{ P(y_{j_t}=l | \bm{\theta})} \bigg\} f(\bm{\theta} | \bm{Y}_{1:(t-1)}) d\bm{\theta}. 
\end{equation}

Rather than focusing on the KL information based on the response distributions, {\textcolor{black}{\cite{kl-is}}} propose the MAX Pos approach by maximizing the KL information between two subsequent latent factor posteriors $f(\bm{\theta}| \bm{Y}_{1:(t-1)})$ and $f(\bm{\theta} | \bm{Y}_{1:t})$. Intuitively, this approach prioritizes items that induce the largest shift in the posterior, formalized as: 
\begin{equation}\label{eq:sp_kl}
    \arg\max_{j_t \in R_t} \sum_{y_{j_t}=0}^{1} f(y_{j_t}| \bm{Y}_{1:(t-1)}) \int_{\bm{\theta}} f(\bm{\theta}|\bm{Y}_{1:(t-1)}) \log \frac{f(\bm{\theta}|\bm{Y}_{1:(t-1)})}{f(\bm{\theta}| \bm{Y}_{1:t})} d\bm{\theta},
\end{equation}
where $f(y_{j_t}| \bm{Y}_{1:(t-1)})$ represents the posterior predictive probability:
\begin{equation} \label{eq:pos_pred}
    f(y_{j_t}| \bm{Y}_{1:(t-1)}) = \int_{\bm{\theta}} f(y_{j_t}| \bm{\theta})  f(\bm{\theta}| \bm{Y}_{1:(t-1)}) d\bm{\theta}.
\end{equation}
A third Bayesian strategy, the mutual information (MI) approach, maximizes the mutual information between the current posterior distribution and the response distribution of new item $y_{j_{t}}$ \citep{mi_selection}. Rooted in experimental design theory \citep{Rnyi1961OnMO}, we can also interpret mutual information as the entropy reduction of the current posterior distribution of $\bm{\theta}$ after observing new response $y_{j_t}$. More formally:
\begin{equation} \label{eq:mi_eq}
    \arg\max_{j_t \in R_t} I_M (\bm{\theta}, y_{j_t}) = \arg\max_{j_t \in R_t} \sum_{y_{j_t}=0}^{1} \int_{\bm{\theta}} f(\bm{\theta}, y_{j_t} | \bm{Y}_{1:(t-1)}) \log \frac{f(\bm{\theta}, y_{j_t} | \bm{Y}_{1:(t-1)})}{f(\bm{\theta}| \bm{Y}_{1:(t-1)}) f(y_{j_t} | \bm{Y}_{1:(t-1)})} d\bm{\theta}. 
\end{equation}

Given the impressive empirical success of the MI method \citep{selection_survey}, we also introduce a competitive heuristic item-selection rule that chooses items with the highest predictive variance under the current posterior estimates, serving as an additional benchmark method. Formally, write $c_{j_t} := f(y_{j_t} | \bm{Y}_{1:(t-1)})$ as defined in (\ref{eq:pos_pred}), and consider the following criterion:
\begin{equation} \label{eq:var-pred-mean}
\arg\max_{j_t \in R_t} \int_{\bm{\theta}} (\Phi(\bm{B}_{j_t}' \bm{\theta} + D_{j_t})-c_{j_t})^2 f(\bm{\theta}|\bm{Y}_{1:(t-1)}) d\bm{\theta}. 
\end{equation}
This rule selects the item $j_t$ that maximizes the variance of the predictive means, weighted by the current posterior $f(\bm{\theta}|\bm{Y}_{1:(t-1)})$. In  Appendix \ref{sec:connection}, we establish a connection between this selection rule and the MI method, showing that both favor items with high prediction uncertainty.

\subsection{Reinforcement Learning Formulation} \label{subsec:rl_problem}

Existing item selection rules face several limitations: first, these methods rely on one-step-lookahead selection, choosing items that provide the most immediate information without considering their impact on future selections, which can result in suboptimal policies \citep{Sutton1998}. Second, they are heuristically designed to balance information across all latent factors, rather than emphasizing the most essential factors of interest. Finally, these rules do not directly minimize test length, potentially increasing test duration without proportional gains in accuracy.

A more principled approach is to formulate item selection as a reinforcement learning (RL) problem, where the optimal policy is learned using Bellman optimality principles rather than relying on heuristics that ignore long-term planning. Beyond addressing myopia, an RL-based formulation provides a direct mechanism to minimize the number of items required to reach a predefined posterior variance reduction threshold, explicitly guiding item selection toward accurately measuring the primary factors of interests.

More generally, consider a general finite horizon setting where each examinee can answer at most $H \leq J$ items, and the CAT algorithm terminates whenever the posterior variance of all factors of interest is smaller than a certain threshold $\tau^2$, or when $H$ is reached. Since $H$ can be set sufficiently large, it serves as a practical secondary stopping criterion to prevent excessively long tests. Formally:

\begin{itemize}
    \item \textbf{State Space} $\mathcal{S}$: the space of all possible latent factor posteriors $\mathcal{S} := \{f(\bm{\theta} | \bm{Y}_{1:t}): t \in \{1, \cdots, H\}\}.$
   The state variable in CAT represents the Bayesian estimate of the examinee's latent traits as a multivariate distribution at time $t$. We may write $s_t := f(\bm{\theta} | \bm{Y}_{1:t})$, with $s_0 := \mathcal{N}(0, \mathbbm{I}_K)$. The state space is discrete and can be exponentially large. Since there are $J$ items in the item bank and the responses are binary, the total number of possible states is $\sum_{t=1}^H \binom{J}{t} \cdot 2^t$. 

     \item \textbf{Action space} $\mathcal{A}$: the item bank $[J]:= \{1, \cdots, J\}$. However, at the t-th selection time, the action space is the remaining items in the test bank that have not yet been selected $R_t:= [J] \setminus \mathcal{I}_{t-1}$, as we do not select the same item twice. 

     \item \textbf{Transition Kernel} $\mathcal{P}: \mathcal{S} \times \mathcal{A} \rightarrow \Delta(\mathcal{S})$, where $\Delta(\mathcal{S})$ denotes the space of probability distributions over $\mathcal{S}$. The transition kernel $\mathcal{P}$ specifies the stochastic rule that governs how the state evolves after an action is taken. In the CAT context, the kernel maps the current latent-trait estimate $s_t$ and the selected item $a_t$ to the next posterior state $s_{t+1}$. Conditional on $(s_t, a_t)$, the next estimate $s_{t+1}$ depends on the examinee’s binary response $y_t \in \{0,1\}$ to item $a_t$, which yields only two possible posterior updates. Because the true ability of the examinee and thus the response probability $p(y_t = 1 \mid s_t, a_t)$ are unknown, the transition kernel in CAT is also unknown and must be approximated.

    \item \textbf{Reward Function}: To directly minimize the test length, we assign a simple $0-1$ reward structure, where we assign negative rewards whenever more items are needed to reduce posterior variance to a given threshold. In a $K$-factor MIRT model, we often prioritize a subset of factors $\mathcal{K} \subset [K]$ (e.g. $\mathcal{K} = \{1\}$ for pCAT-COG), with the test terminating once the posterior variances of all factors in $\mathcal{K}$ fall below the predefined threshold $\tau^2$:
        \begin{equation}
        \label{eq:reward}
        R^{(t)}(s_t, a, s_{t+1}) = 
        \begin{cases} 
        -1 & \text{if } V_{t+1} > \tau^2, \\
        0 & \text{otherwise},
        \end{cases}
        \end{equation}
        where \(V_{t+1} = \max_{k \in \mathcal{K}} \operatorname{Var}(\bm{\theta}_k \mid \bm{Y}_{1:(t+1)}) \) is the maximum marginal posterior variance among the prioritized factors $\mathcal{K}$. This also simplifies learning the value function, as the rewards are always bounded integers within $[-H, -1]$. We further illustrate the advantages of adopting this reward structure in Appendix \ref{sec:greedy-proof}.
\end{itemize}

Rather than following a heuristic rule, we learn a policy $\pi$ that maps the current state (latent factor estimates) to a distribution over potential items: 
\[
\pi:\ \mathcal{S}\to\Delta(\mathcal{A}),\qquad a_t \sim \pi(\cdot \mid s_t).
\]
Given an initial state $S_0=s_0$, and the discount factor $\gamma \in (0,1]$, the value function is
\begin{equation} \label{eq:value-function}
    v_{\pi}(s_0) := \mathbbm{E}_{\pi} \left[\sum_{t=0}^{H-1} \gamma^t R^{(t)}(s_t, \pi(s_t), s_{t+1}) | S_0 = s_0 \right],
\end{equation}
The expectation $\mathbb{E}_\pi[\cdot]$ is taken over trajectories generated by drawing $a_t$ from $\pi(\cdot \mid s_t)$ and then $s_{t+1}$ from the transition kernel $\mathcal{P}(\cdot \mid s_t,a_t)$. 
Because direct evaluation of (\ref{eq:value-function}) is challenging, it is more convenient to consider the action–value function
\begin{equation} \label{eq:q-function}
    Q_{\pi}(s_0, a) := \mathbbm E_{\pi} \left[\sum_{t=0}^{H-1} \gamma^{t}R^{(t)} (s_t, \pi(s_t), s_{t+1}) | S_0=s_0, A_0= a \right],
\end{equation}
with $v_\pi(s)=\mathbb{E}_{a\sim\pi(\cdot\mid s)}Q_\pi(s,a)$. The optimal policy $\pi^{\star}$ satisfies the Bellman equation \citep{10.5555/560669}:
\begin{equation} \label{eq:bellman}
    Q_{\pi^\star}(s,a)\;=\;\mathbb{E}_{S'\sim\mathcal{P}(\cdot\mid s,a)}\!\left[\,R(s,a,s')+\gamma\,\max_{a'} Q_{\pi^\star}(s',a')\,\right],
\end{equation}
where $s'$ is the next posterior estimate after applying action $a$ to the current posterior $s$, and $v_{\pi^\star}(s)=\max_{a} Q_{\pi^\star}(s,a)$. Since the state space grows exponentially and the transition kernel is unknown, solving for $\pi^*$ using traditional dynamic programming approaches becomes intractable \citep{Sutton1998}. 

In practice, our deep Q-learning approach \citep{Mnih2013PlayingAW} does not evaluate these expectations analytically. Let the transition $(s,a,r,s')$ represent one testing step, where item $a$ is selected under state $s$, the reward $r$ is observed, and the posterior is updated to $s'$ through the Bayesian MIRT model. We approximate the Bellman fixed point in (\ref{eq:bellman}) by fitting a parametric function $Q_w$ that minimizes the squared temporal-difference loss:
\begin{equation} \label{eq:td-loss}
    \mathcal{L}(w)
= \mathbb{E}_{(s,a,r,s')\sim \mathcal{D}}
\big[\big(Q_w(s,a) - y(s,a,r,s')\big)^2\big]
\end{equation}
where
\begin{equation} \label{eq:pred-q}
y(s,a,r,s') = r + \gamma \max_{a'} Q_{\bar{w}}(s',a').
\end{equation}
Here, $\mathcal{D}$ (replay buffer) denotes a collection of previously observed testing steps $(s,a,r,s')$ generated during simulation, which is used to approximate the expectation above by empirical averaging. The parameter $\bar{w}$ corresponds to a delayed copy of the model parameters $w$, updated less frequently to stabilize the numerical optimization. We describe the full Q-learning algorithm in Section~\ref{sec:Q-rl}.

\section{Accelerating Item Selection via Posterior Identification} \label{sec:acceleration}

The central insight of our deep CAT framework is that, by iteratively applying the E-step of the PXL-EM algorithm \citep{doi:10.1080/01621459.2025.2476786}, the latent factor posteriors admit tractable posterior updates. This result is critical for reinforcement learning, as the posterior distribution is deemed to be non-Gaussian and is analytically intractable under the traditional MIRT literature. By obtaining a tractable representation of this posterior, we can parametrize the examinee’s evolving ability and uncertainty as a well-defined state variable, which is an essential prerequisite for applying Q-learning to adaptive testing. The Bayesian MIRT formulation thus not only replaces costly MCMC procedures with efficient posterior updates under a probit link for existing Bayesian item selection rules discussed in Section \ref{subsec:existing_rules}, but also supplies the statistical foundation that makes the subsequent reinforcement-learning framework feasible.

Specifically, we show that the latent factor posterior updates during CAT belong to an instance of the unified-skew-normal distribution \citep{unified_skew_normal}, defined as follows:
\begin{definition} \label{def:usn}
Let $\Phi_{T} \left\{ \bm{V}; \bm{\Sigma} \right\}$ represent the cumulative distribution function of a T-dimensional multivariate Gaussian distribution $N_{T}\left(0_{T}, \bm{\Sigma} \right)$ evaluated at vector $\bm{V}$. A K-dimensional random vector $\bm{\theta} \sim \operatorname{SUN}_{K, T}(\bm{\mu}, \bm{\Omega}, \bm{\Delta}, \bm{\gamma}, \bm{\Gamma})$ has the \textbf{unified skew-normal} distribution if it has the probability density function:
$$
\phi_{K}(\bm{\theta}; \bm{\mu}, \bm{\Omega}) \frac{\Phi_{T}\left\{\bm{\gamma}+\bm{\Delta}' \bar{\bm{\Omega}}^{-1} \bm{\omega}^{-1}(\bm{\theta}-\bm{\mu}) ; \bm{\Gamma}-\bm{\Delta}' \bar{\bm{\Omega}}^{-1} \bm{\Delta}\right\}}{\Phi_{T}(\bm{\gamma} ; \bm{\Gamma})}.
$$
Here,  $\phi_{K}(\bm{\theta} ; \bm{\mu}, \bm{\Omega})$ is the density of a $K$-dimensional multivariate Gaussian with expectation $\bm{\mu}=\left(\mu_{1}, \ldots, \mu_{K}\right)' $, and a $K$ by $K$ covariance matrix $\bm{\Omega} =\bm{\omega} \bar{\bm{\Omega}} \bm{\omega}$, where $\bar{\bm{\Omega}}$ is the correlation matrix and $\bm{\omega}$ is a diagonal matrix with the square roots of the diagonal elements of $\bm{\Omega}$ in its diagonal.  $\bm{\Delta}$ is a $K$ by $T$ matrix that determines the skewness of the distribution, and $\bm{\gamma} \in \mathbbm R^T$ controls the flexibility in departures from normality.

In addition, the $(K+T) \times(K+T)$ matrix $\bm{\Omega}^{*}$, having blocks $\bm{\Omega}_{[11]}^{*}=\bm{\Gamma}, \bm{\Omega}_{[22]}^{*}=\bar{\bm{\Omega}}$ and $\bm{\Omega}_{[21]}^{*}=\bm{\Omega}_{[12]}^{*'}=\bm{\Delta}$, needs to be a full-rank correlation matrix.
\end{definition}

Suppose an arbitrary CAT item selection algorithm has already selected $T$ items with item parameters $\bm{\xi}_{1:T} = (\bm{B}_{1:T}, \bm{D}_{1:T})$, where $\bm{B}_{1:T}:=[\bm{B}_{j_1}, \cdots, \bm{B}_{j_T}]'$ and $\bm{D}_{1:T}:=[D_{j_1}, \cdots, D_{j_T}]'$. Then it is possible to show the following result:

\begin{theorem}\label{thm:3.2}
Consider a K-factor CAT item selection procedure after selecting $T$ items, with $\mathcal{N}(\bm{0}_K, \mathbbm{I}_K)$ prior placed on the test taker's latent trait $\bm{\theta}$. If $\bm{Y}_{1:T}=\left(y_{j_1}, \ldots, y_{j_T}\right)'$ is conditionally independent binary response data from the two-parameter probit MIRT model defined in (\ref{eq:probit-mirt}), then
\begin{equation*}
   (\bm{\theta} \mid \bm{Y}_{1:T}, \bm{B}_{1:T}, \bm{D}_{1:T}) \sim \operatorname{SUN}_{K, T}\left(\bm{\mu}_{\mathrm{post}}, \bm{\Omega}_{\text {post }}, \bm{\Delta}_{\text {post }}, \bm{\gamma}_{\text {post }}, \bm{\Gamma}_{\text {post }}\right), 
\end{equation*}
with posterior parameters
\begin{gather*}
  	\bm{\mu}_{\mathrm{post}} = \bm{0}_K, \quad \bm{\Omega}_{\text {post }}= \mathbbm{I}_K,  \quad \bm{\Delta}_{\text {post }}=  \bm{C}_1'\bm{C}_3^{-1}, \\
	\bm{\gamma}_{\text {post }} = \bm{C}_3^{-1} \bm{C}_2, \quad \bm{\Gamma}_{\text {post }} =\bm{C}_3^{-1}\left(\bm{C}_1 \bm{C}_1'+\mathbbm I_{T}\right) \bm{C}_3^{-1},
   \end{gather*}
where $\bm{C}_1 = \text{diag}(2y_{j_1}-1, \cdots, 2y_{j_T}-1) \bm{B}_{1:T}$ and  $\bm{C}_2 = \text{diag}(2y_{j_1}-1, \cdots, 2y_{j_T}-1)\bm{D}_{1:T}$. The matrix $\bm{C}_3$ is a $T$ by $T$ diagonal matrix, where the $(t,t)$-th entry is $(\|\bm{B}_{t,T}\|_2^2 +1)^{\frac{1}{2}}$, with $\bm{B}_{t, T}$ representing the $t$-th row of $\bm{B}_{1:T}$. 
\end{theorem}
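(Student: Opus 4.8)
The plan is to compute the posterior density in closed form and match it term-by-term against the $\operatorname{SUN}_{K,T}$ density of Definition \ref{def:usn}. First I would rewrite the probit likelihood in a sign-symmetric form. Since $P(y_{j_t}=1\mid\bm{\theta})=\Phi(B_{j_t}'\bm{\theta}+D_{j_t})$ and $P(y_{j_t}=0\mid\bm{\theta})=\Phi(-(B_{j_t}'\bm{\theta}+D_{j_t}))$, setting $s_{j_t}:=2y_{j_t}-1\in\{-1,+1\}$ collapses both cases into $P(y_{j_t}\mid\bm{\theta})=\Phi\big(s_{j_t}(B_{j_t}'\bm{\theta}+D_{j_t})\big)$. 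By conditional independence of the responses, the likelihood is $\prod_{t=1}^T\Phi\big(s_{j_t}(B_{j_t}'\bm{\theta}+D_{j_t})\big)$, and this is exactly where the matrices enter: the $t$-th argument is the $t$-th coordinate of $C_1\bm{\theta}+C_2$, since the $t$-th row of $C_1=\mathrm{diag}(s_{j_1},\dots,s_{j_T})B_{1:T}$ is $s_{j_t}B_{j_t}'$ and the $t$-th entry of $C_2$ is $s_{j_t}D_{j_t}$.

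The key structural step is to recognize the product of univariate standard normal CDFs as a single $T$-variate Gaussian CDF with identity covariance, $\prod_{t=1}^T\Phi(a_t)=\Phi_T(a;\mathbbm{I}_T)$ with $a=(a_1,\dots,a_T)'$, which holds because independent standard normals factorize. Applying this with $a=C_1\bm{\theta}+C_2$ and multiplying by the Gaussian prior $\phi_K(\bm{\theta};0,\mathbbm{I}_K)$ shows the posterior is proportional to $\phi_K(\bm{\theta};0,\mathbbm{I}_K)\,\Phi_T(C_1\bm{\theta}+C_2;\mathbbm{I}_T)$. The only remaining task is to massage this CDF factor into the normalized SUN shape, in which the argument reads $\gamma+\Delta'\bm{\theta}$ and the implied $(K+T)$-dimensional matrix $\Omega^*$ is a \emph{correlation} matrix; this mirrors the probit-regression conjugacy of \citep{Durante_2019} with $\bm{\theta}$ playing the role of the coefficients and $B_{1:T}$ the design.

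To normalize I would invoke the diagonal-rescaling identity $\Phi_T(x;\Sigma)=\Phi_T(Dx;D\Sigma D)$, valid for any positive diagonal $D$ because rescaling each coordinate of the underlying Gaussian preserves the componentwise inequalities. Choosing $D=C_3^{-1}$ turns the argument into $C_3^{-1}(C_1\bm{\theta}+C_2)=\gamma_{\text{post}}+\Delta_{\text{post}}'\bm{\theta}$ with $\Delta_{\text{post}}=C_1'C_3^{-1}$ and $\gamma_{\text{post}}=C_3^{-1}C_2$, and turns the covariance into $C_3^{-1}\mathbbm{I}_T C_3^{-1}=C_3^{-2}=\Gamma_{\text{post}}-\Delta_{\text{post}}'\Delta_{\text{post}}$ with $\Gamma_{\text{post}}=C_3^{-1}(C_1C_1'+\mathbbm{I}_T)C_3^{-1}$. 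The specific constant $(\|B_{t,T}\|_2^2+1)^{1/2}$ defining $C_3$ is chosen precisely to force each diagonal entry of $\Gamma_{\text{post}}$ to equal $1$, since $(C_1C_1'+\mathbbm{I}_T)_{tt}=\|B_{t,T}\|_2^2+1$. With $\mu=0_K$ and $\Omega=\bar\Omega=\omega=\mathbbm{I}_K$, the posterior density now coincides with the SUN density up to its normalizing denominator $\Phi_T(\gamma_{\text{post}};\Gamma_{\text{post}})$; because both are proper densities, proportionality forces equality, so the constant need not be evaluated separately.

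The last item to verify, and the only place the construction could fail, is the requirement in Definition \ref{def:usn} that $\Omega^*=\left(\begin{smallmatrix}\Gamma_{\text{post}} & \Delta_{\text{post}}'\\ \Delta_{\text{post}} & \mathbbm{I}_K\end{smallmatrix}\right)$ be a full-rank correlation matrix. Unit diagonals hold by the construction of $C_3$ for the $\Gamma_{\text{post}}$ block and trivially for the $\mathbbm{I}_K$ block. For positive-definiteness I would take the Schur complement of the $\mathbbm{I}_K$ block, which equals $\Gamma_{\text{post}}-\Delta_{\text{post}}'\Delta_{\text{post}}=C_3^{-2}\succ0$; hence $\Omega^*\succ0$. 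I expect the genuine difficulty here to be bookkeeping rather than conceptual: keeping the sign matrix $\mathrm{diag}(s_{j_1},\dots,s_{j_T})$, the rescaling $C_3$, and the transpose conventions in $\Delta_{\text{post}}=C_1'C_3^{-1}$ mutually consistent, so that the rescaled argument matches $\gamma+\Delta'\bm{\theta}$ exactly and the covariance lands precisely on $\Gamma-\Delta'\Delta$.
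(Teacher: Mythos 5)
Your proposal is correct and follows essentially the same route as the paper's proof: collapse the probit likelihood into a single $T$-variate Gaussian CDF $\Phi_T(C_1\bm{\theta}+C_2;\mathbbm{I}_T)$, rescale the argument and covariance by $C_3^{-1}$ so that $\Gamma_{\text{post}}$ has unit diagonal, and match the resulting kernel term-by-term to the SUN density with $\mu=0_K$, $\Omega=\bar\Omega=\omega=\mathbbm{I}_K$. The only cosmetic difference is the final full-rank correlation check, where you take the Schur complement of the $\mathbbm{I}_K$ block to get $\Gamma_{\text{post}}-\Delta_{\text{post}}'\Delta_{\text{post}}=C_3^{-2}\succ 0$, while the paper exhibits $\Omega^*$ explicitly as a rescaled covariance matrix of the Gaussian vector $(C_1z_2+\epsilon,\,z_2)$; both arguments are valid.
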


Theorem \ref{thm:3.2} provides an exact finite-sample Bayesian characterization of the latent factor posterior. With a multivariate normal prior on $\bm{\theta}$ and a probit MIRT likelihood, the posterior $f(\bm{\theta}\mid \bm{Y}_{1:T})$ belongs to the unified skew-normal (SUN) family. This does not conflict with the empirical Bayes result of \cite{ChangStout1993}, which establishes asymptotic posterior normality as the number of items $J$ grows; \textcolor{black}{related discussion of Bayesian latent-trait estimation and asymptotic covariance properties in MIRT is given in \cite{Wang_2015}.} The finite-sample form is particularly relevant for CAT, where only a small number of items has been administered and large-sample normal approximations may be unreliable. This representation enables exact posterior calculations, improving uncertainty quantification and item selection in short tests. 
\textcolor{black}{For ordinal responses under a probit link, closely related SUN-based approximations can also be developed as discussed in \cite{doi:10.1080/01621459.2025.2476786}. The latent factor posterior can often be well approximated within the SUN family, although the representation is generally not exact.}

According to \cite{unified_skew_normal}, an arbitrary unified skew normal distribution $\bm{\theta} \sim \operatorname{SUN}_{K, T}(\bm{\mu}, \bm{\Omega}, \bm{\Delta}, \bm{\gamma}, \bm{\Gamma})$ has a stochastic representation as a linear combination of a $K$-dimensional multivariate normal random variable $\bm{V}_0$, and a $T$-dimensional truncated multivariate normal random variable $\bm{V}_{1, -\gamma}$ as follows:
\begin{equation} \label{eq:sto_rep}
   \bm{\theta} \overset{d}{=} \bm{\mu} + \bm{\omega} (\bm{V}_0 + \bm{\Delta}  \bm{\Gamma}^{-1} \bm{V}_{1, - \gamma}), 
\end{equation}
where $\bm{V}_0 \sim \mathcal{N}(0, \bar{\bm{\Omega}} - \bm{\Delta} \bm{\Gamma}^{-1} \bm{\Delta}') \in \mathbbm R^{K}$, and $\bm{V}_{1, -\gamma}$ is a multivariate normal distribution $ \mathcal{N}(0, \bm{\Gamma})$ truncated to $\{\bm{V}_1 \in \mathbb{R}^{T}: V_{1i} \geq -\gamma_i \quad,\forall i\}$.

Based on Theorem \ref{thm:3.2}, we have a closed-form expression for the posterior parameters $(\bm{\mu}, \bm{\Omega}, \bm{\Delta}, \bm{\gamma}, \bm{\Gamma})$ for $\bm{\theta}$. Recall that $\bm{\Omega} = \bm{\omega} \bar{\bm{\Omega}} \bm{\omega}$ is the standard covariance correlation matrix decomposition from definition \ref{def:usn}, and hence the only unrealized stochastic terms in equation (\ref{eq:sto_rep}) are $\bm{V}_0$ and $\bm{V}_{1, -\bm{\gamma}}$. This suggests that sampling from the latent factor posterior distribution $f(\bm{\theta}|\bm{Y}_{1:T})$ requires two independent steps: sampling from a $K$-dimensional multivariate normal distribution $\bm{V}_0$, and a $T$-dimensional multivariate truncated normal distribution $\bm{V}_1$. As a result, the direct sampling approach scales efficiently with the number of factors $K$, since generating samples from $K$-dimensional multivariate normal distributions is trivial. Moreover, in CAT settings, the test is typically terminated early, meaning $T$ remains relatively small. When $T$ is moderate (e.g., $T<1,000$), sampling from the truncated multivariate normal distribution remains computationally efficient using the minimax tilting method \citep{Botev_2016}. The exact proof of Theorem \ref{thm:3.2} and the sampling details can be found in Appendix \ref{sec:proof}.

The direct sampling approach provides substantial gains in both computational efficiency and numerical precision compared with traditional MCMC methods commonly used in the MIRT literature \citep{BeguinGlas2001, JiangTemplin2019}. In standard MIRT settings, the posterior distribution of the latent factors is typically regarded as intractable and non-Gaussian, requiring a Markov chain to be constructed via data-augmentation techniques \citep{f3928584-48a1-3ab1-b404-2ec9ffdf51bb, Polson01122013} so that its stationary distribution approximates the posterior. This procedure entails repeated simulation of augmented data and is inherently sequential, which limits opportunities for parallelization. Moreover, it demands additional tuning, burn-in, and convergence diagnostics to ensure that the chain adequately converges to the posterior distribution. In contrast, Theorem \ref{thm:3.2} establishes that the posterior distribution can be expressed exactly as a unified skew-normal distribution, allowing direct and parallel sampling without the need for iterative convergence procedures.

Theorem \ref{thm:3.2} also plays a central role in our proposed deep Q-learning algorithm, as it fully characterizes the latent factor posterior distribution $f(\bm{\theta}|\bm{Y}_{1:T})$, enabling the parametrization of the state variable $s_T$, which serves as an input to the Q-network illustrated in Section \ref{subsec:network-design}. We also illustrate the application of Theorem \ref{thm:3.2} in accelerating the existing item selection rules below.

\subsection{Accelerating Existing Rules} \label{subsec:accelerate_benchmarks}

While Bayesian CAT criteria require multidimensional integration, these quantities are typically evaluated in practice via Monte Carlo approximation rather than analytic quadrature in moderate to high dimensions. In such Bayesian CAT implementations, the dominant computational cost lies in obtaining valid samples from the latent factor posterior. By providing exact posterior samples without Markov chain construction, our approach removes this primary bottleneck and renders the remaining Monte Carlo integration computationally efficient and easily parallelizable. For example, in computing the KL-EAP item selection rule (\ref{eq:kl}), we directly sample from $f(\bm{\theta}|\bm{Y}_{1:(t-1)})$ instead of resorting to MCMC, and evaluate the integral via Monte Carlo integration. Since the EAP estimate $\hat{\bm{\theta}}_{t-1}$ remains fixed at time step $t$, evaluating the KL information term is straightforward. For the Max Pos item selection rule in equation (\ref{eq:sp_kl}), we again obtain i.i.d. samples from $f(\bm{\theta}|\bm{Y}_{1:(t-1)})$ and compute the posterior predictive probabilities in (\ref{eq:pos_pred}). Although the density ratio $\frac{f(\bm{\theta}|\bm{Y}_{1:(t-1)})}{f(\bm{\theta}|\bm{Y}_{1:t})}$ is difficult to evaluate, we can leverage the conditional independence assumption of the MIRT model. In particular, we can express the joint distribution of $(\bm{\theta}, y_{j_t})$ as 
\begin{equation} \label{eq:mirt-trick}
    f(\bm{\theta}, y_{j_t} | \bm{Y}_{1:(t-1)}) =  f(\bm{\theta} | \bm{Y}_{1:(t-1)}, y_{j_t}) f(y_{j_t} | \bm{Y}_{1:(t-1)}) = f(y_{j_t}| \bm{\theta}) f(\bm{\theta} | \bm{Y}_{1:(t-1)}).
\end{equation}
Using Equation (\ref{eq:mirt-trick}), we rewrite the KL information term in Equation (\ref{eq:sp_kl}) as:
\begin{align*}
  \int f(\bm{\theta}| \bm{Y}_{1:(t-1)}) \log \frac{f(\bm{\theta}| \bm{Y}_{1:(t-1)})}{f(\bm{\theta}| \bm{Y}_{1:(t-1)}, y_{j_t})} d\bm{\theta}  &  = \int f(\bm{\theta}| \bm{Y}_{1:(t-1)}) \log  \frac{f(\bm{\theta}| \bm{Y}_{1:(t-1)}) f(y_{j_t}| \bm{Y}_{1:(t-1)})}{f(\bm{\theta}, y_{j_t} | \bm{Y}_{1:(t-1)})} d\bm{\theta}  \\
  & = \int f(\bm{\theta}| \bm{Y}_{1:(t-1)}) \log  \frac{f(y_{j_t}| \bm{Y}_{1:(t-1)})}{f(y_{j_t} | \bm{\theta})} d\bm{\theta}.
\end{align*}
Since $f(y_{j_t}|\bm{\theta})$ can be easily computed for each $\bm{\theta} \sim f(\bm{\theta}| \bm{Y}_{1:(t-1)})$, the online computation of Max Pos remains efficient. 

Although the mutual information selection rule in Equation (\ref{eq:mi_eq}) has demonstrated strong empirical performance \citep{selection_survey}, its computational complexity remains a significant challenge. By applying Equation (\ref{eq:mirt-trick}), we can rewrite mutual information as
\begin{equation} \label{eq:mi-rewrite}
     \arg\max_{j_t \in R_t} \sum_{y_{j_t}=0}^{1}  f(y_{j_t}| \bm{Y}_{1:(t-1)}) \int_{\bm{\theta}} f(\bm{\theta}| \bm{Y}_{1:(t-1)}, y_{j_t} ) \log \frac{f(y_{j_t} | \bm{\theta})}{ f(y_{j_t} | \bm{Y}_{1:(t-1)})} d\bm{\theta}.
\end{equation}
This formulation reveals that maximizing mutual information is structurally similar to Max Pos but requires much more computational effort. Unlike Max Pos, where sampling is performed from the current posterior $f(\bm{\theta}|\bm{Y}_{1:(t-1)})$, mutual information requires sampling from future posteriors $f(\bm{\theta}| \bm{Y}_{1:(t-1)}, y_{j_t})$. Since each candidate item $j_t \in R_t$ has two possible outcomes ($y_{j_t} = 0$ or $y_{j_t} = 1$), evaluating equation (\ref{eq:mi-rewrite}) requires obtaining samples from $|R_t| \times 2$ distinct posterior distributions. Even if sampling each individual posterior is computationally efficient, this approach becomes impractical for large item banks.

We hence propose a new approach to dramatically accelerate the computation of the mutual information quantity using the idea of importance sampling and resampling and bootstrap filter \citep{Gordon1993NovelAT,sir}. Rather than explicitly sampling the future posterior $f(\bm{\theta}| \bm{Y}_{1:(t-1)}, y_{j_t})$ for each $j_t \in R_t$ and $y_{j_t} \in \{0, 1\}$, we can simply sample from the current posterior $f(\bm{\theta} | \bm{Y}_{1:(t-1)})$ once, and then perform proper posterior reweighting. Under equation (\ref{eq:probit-mirt}), let $p(\bm{\theta})$ be the prior on the latent factors $\bm{\theta}$,  $l_1(\bm{\theta})$ be the current data likelihood, and $l_2(\bm{\theta})$ denote the future data likelihood after observing $y_{j_t}$. We have the future posterior density as follows:
\begin{align} \label{eq:reweight}
     f(\bm{\theta}| \bm{Y}_{1:(t-1)}, y_{j_t}) & \propto l_2(\bm{\theta}) p(\bm{\theta}) \propto \frac{l_2(\bm{\theta})p(\bm{\theta})}{l_1(\bm{\theta})p(\bm{\theta})} f(\bm{\theta} | \bm{Y}_{1:(t-1)}) \notag \\
     & = \Phi(\bm{B}_{j_t}'\bm{\theta} + D_{j_t})^{y_{j_t}}(1- \Phi(\bm{B}_{j_t}'\bm{\theta} + D_{j_t}))^{1- y_{j_t}} f(\bm{\theta} | \bm{Y}_{1:(t-1)}).
\end{align}
Equation (\ref{eq:reweight}) suggests we can generate samples from $f(\bm{\theta}| \bm{Y}_{1:(t-1)}, y_{j_t})$ via reweighting and resampling from the current posterior samples $f(\bm{\theta}|\bm{Y}_{1:(t-1)})$. Specifically, given a sufficiently large set of posterior samples $(\bm{\theta}_1, \cdots, \bm{\theta}_M) \sim f(\bm{\theta}|\bm{Y}_{1:(t-1)})$, we assign distinct weights for each sample $m \in \{1, \cdots, M\}$ and future item $j_t \in R_t$: 
$$q_m = \frac{w_m}{\sum_{i=1}^M w_i}, \quad \text{where } w_i= \Phi(\bm{B}_{j_t}'\bm{\theta}_i + D_{j_t})^{y_{j_t}}(1- \Phi(\bm{B}_{j_t}'\bm{\theta}_i + D_{j_t}))^{1- y_{j_t}}.$$
To sample from $f(\bm{\theta}|\bm{Y}_{1:(t-1)}, y_{j_t})$, we then draw from the discrete distribution over $(\bm{\theta}_1, \cdots, \bm{\theta}_M)$, placing weight $q_m$ on $\bm{\theta}_m$. This approach eliminates the need to sample from $|R_t| \times 2$ distinct posteriors directly, further accelerating mutual information computation and making it scalable for large item banks.

\section{Learning Optimal Item Selection Policy} \label{sec:Q-rl}

Building on the Bayesian MIRT foundation established in the previous section, we now formulate the problem of CAT as an RL task. The Bayesian framework provides two key advantages that make this integration both principled and computationally feasible. First, the identified latent factor posterior distributions offer a well-defined representation of examinee knowledge, allowing their corresponding posterior parameters to be parametrized directly as state variables. Without such identification, encoding an unknown and analytically intractable posterior would be ambiguous. Second, because reinforcement learning typically requires extensive simulations to learn an optimal policy, the acceleration achieved in online item selection enables rapid simulation of testing sessions, thereby substantially improving the efficiency of policy training.

As illustrated in Section \ref{subsec:rl_problem}, an RL approach addresses the myopic nature of traditional CAT selection rules, enables a more flexible reward structure, and directly minimizes the number of items required for performing online adaptive testing. Specifically, we propose a novel double Q-learning algorithm \citep{Mnih2015, 
van_Hasselt_Guez_Silver_2016} for online item selection in CAT. The algorithm trains a deep neural network offline using only the item parameters estimated from an arbitrary two-parameter MIRT model. {\textcolor{black}{The offline training phase is completed before any live CAT administration. After training, the network weights are fixed, and online CAT only requires sequentially updating the posterior state and applying the learned network to select the next item.}} During online item selection, the neural network takes the current posterior distribution \( s_t := f(\bm{\theta} \mid \bm{Y}_{1:t}) \) 
as input and outputs the next item selection for step \( (t+1) \). The neural network architecture is described in Section \ref{subsec:network-design}, while the double deep Q-learning algorithm is detailed in Section \ref{subsec:doubleQ}.

\subsection{Deep Q-learning Neural Network Design} \label{subsec:network-design}

A potential reason that a deep reinforcement learning approach has not been proposed in the CAT literature is the ambiguity arising from unidentified latent factor distributions. However, by Theorem \ref{thm:3.2}, it is straightforward to compactly parametrize the posterior distribution at each time step $t$ using the parameters $\bm{C}_1 \in \mathbbm R^{t \times K}$, $\bm{C}_2 \in \mathbbm{R}^t$, and $\text{diag}(\bm{C}_3) \in \mathbbm R^t$, where $\text{diag}(\bm{C}_3)$ represents the diagonal vector of $\bm{C}_3$. This structured representation of the posterior enables item selection policy learning via deep neural networks, which takes posterior parameters as input and outputs item selection.

Neural networks can be regarded as flexible function approximators that learn nonlinear mappings between inputs and outputs through a composition of simple transformations \citep{Goodfellow-et-al-2016}. In our framework, the deep neural network consists of two key components: an encoder and a classifier. At each time step $t$, the encoder maps the collection of posterior parameters to a latent representation in $\mathbb{R}^{L}$, where $L$ is a hyperparameter and represents the dimension of the latent feature space. The classifier then outputs a $J$-dimensional vector of estimated Q-values and selects the next item from the item bank that is expected to yield the highest Q-value.

Define the collection of the posterior parameters $\tilde{\bm{\xi}}_t:= \{(\bm{C}_{1h}, C_{2h}, C_{3h})\}_{h=1}^{t}$ , where $\bm{C}_{1h} \in \mathbb{R}^{K}$ is the $h$-th row of $\bm{C}_1$, $C_{2h} \in \mathbb{R}$ is the $h$-th element of $\bm{C}_2$, and $C_{3h} \in \mathbb{R}$ is the h-th elements in the diagonal of $\bm{C}_3$. Since the size of the posterior parameters $\tilde{\bm{\xi}}_t$ grows over time, and permuting the tuples within $\tilde{\bm{\xi}}_t$ still describes the same posterior, we have to design a neural network that can take inputs of growing size, and can provide output that is permutation invariant of the inputs. One solution is to consider 
the weight sharing idea from the Bayesian experimental design literature \citep{Foster2021DeepAD}. Let $\phi_1(.): \mathbb{R}^{K+2} \rightarrow \mathbb{R}^{L_1}$ denote an encoder component that maps each tuple $(\bm{C}_{1h}, C_{2h}, C_{3h}) \in \mathbb{R}^{K+2}$ to an $L_1$-dimensional latent space, and consider the operation $g_1(.)$ as follows:
\vspace{-0.2cm}
\begin{equation} \label{eq:permu-invariant}
     g_1(\tilde{\bm{\xi}}_t) := \sum_{h=1}^t \phi_1 \{(\bm{C}_{1h}, C_{2h}, C_{3h})\}.
\end{equation}
Observe that $g_1(.)$ is capable of handling a growing number of inputs through summations of $\phi_1(.)$ functions over $t$. More importantly, permuting the order of the tuples in $\tilde{\bm{\xi}}_t$ does not change the value of $g_1(.)$, since summation is permutation invariant. Although the form of $g_1$ may look restrictive,  any function $f(.)$ operating on a countable set can be decomposed into the form $\rho \circ g_1(.)$, where $\rho(.)$ is a suitable transformation that can be learned from another neural network (see Theorem 2 of \cite{10.5555/3294996.3295098}). 

In principle, the network design $\rho \circ g_1(.)$, coupled with the Q-learning algorithm, is sufficient to learn the optimal policy. However, to enhance learning efficiency, we further enrich the state representation $\tilde{\bm{\xi}}_t$ with a matrix of prediction quartiles $\bm{\Psi}_t \in \mathbbm R^{J \times Q}$, where each row contains a vector of quantiles of the predictive distribution for item $j$. We form these quantiles by first drawing samples $\bm{\theta}_{1}, \cdots, \bm{\theta}_M \sim f(\bm{\theta} | \bm{Y}_{1:t})$ as in Section \ref{sec:acceleration}, and then computing the quantiles of the prediction samples $\{\Phi(\bm{B}_j'\bm{\theta}_i + D_j)\}_{i=1}^M$ for each item $j$. Since sampling from $f(\bm{\theta}| \bm{Y}_{1:t})$ only needs to be done once, computing the matrix $\bm{\Psi}_t$ online is computationally efficient.

This practice of augmenting the raw state variable with additional contextual features of the states ($\bm{\Psi}_t$) echoes the common strategy in the RL literature \citep{Mnih2015,pmlr-v37-schaul15}: learning tends to be more stable and efficient when the state representation captures not only the current ability estimate but also characteristics of the item bank. In our setting, $\bm{\Psi}_t$ offers a richer description of the predictive distributions across items given the current estimates, effectively serving as additional covariates for item selection. Empirically, we find that incorporating $\bm{\Psi}_t$ substantially accelerates convergence toward the optimal item selection policy. Including $\bm{\Psi}_t$ in the state representation is harmless, as it is a deterministic function of the posterior distribution; identical posteriors will always produce identical $\bm{\Psi}_t$, ensuring that the policy remains consistent across equivalent states.

In summary, Figure \ref{fig:policy-architecture} describes the architecture of the policy network used in our approach. \textcolor{black}{The same network is trained offline and then deployed during live CAT with fixed weights, so only the posterior state is updated sequentially during online administration}. To select the $(t+1)$-th item, our proposed neural net takes two inputs: the posterior parameters $\tilde{\bm{\xi}}_t$ and the prediction matrix $\bm{\Psi}_t$, and outputs the selected item. Let $\phi_2: \mathbb{R}^{J \times Q} \to \mathbb{R}^{L_2}$ represent the encoder component that maps the matrix $\bm{\Psi}_t \in \mathbbm R^{J \times Q}$ to $L_2$-dimensional space. Write $L=L_1+L_2$ and the concatenated outputs of $g_1(.)$ and $\phi_2(.)$ as $[g_1(\tilde{\bm{\xi}}_t)', \phi_2(\bm{\Psi}_t)']' \in \mathbb{R}^L $ , we then define the classification component of neural network as $\rho(.): \mathbb{R}^L \rightarrow \mathbb{R}^J$, which maps the concatenated outputs of $g_1(.)$ and $\phi_2(.)$ to a $J$-dimensional logit vector. Denote the final policy network as $\pi_{\phi}$ and recall that $\rho(.)$ represents the classification layer, our proposed network selects the item at time $t$ corresponding to the maximum value of the following function:
\begin{equation*}
     \pi_{\phi}(\tilde{\bm{\xi}}_t, \bm{\Psi}_t) := \rho \circ \left\{ [g_1(\tilde{\bm{\xi}}_t)', \phi_2(\bm{\Psi}_t)' ]' \right\}.
\end{equation*}
We implement both primary and target Q-networks as simple feed-forward multilayer perceptron and find that their performance is largely insensitive to the exact choice of $L_1$ and $L_2$, and does not require very large depth. All architectural details such as layer sizes, activations, and optimizer settings are provided in Appendix \ref{sec:network_details}. This robustness suggests that our empirical improvements stem from the CAT‐specific state design and augmentation rather than network complexity.


\begin{figure}[t]
\centering
\resizebox{0.96\linewidth}{!}{%
\begin{tikzpicture}[
    font=\small,
    node distance=10mm and 14mm,
    every node/.style={align=center},
    inb/.style  ={draw=black!60, rounded corners, fill=yellow!12, inner sep=4pt},
    enc/.style  ={draw=black!55!black, rounded corners, fill=black!10, inner sep=5pt},
    agg/.style  ={draw=black!60, circle, fill=gray!10, minimum size=9mm, inner sep=0pt},
    cat/.style  ={draw=purple!60!black, rounded corners, fill=purple!8, inner sep=4pt},
    cls/.style  ={draw=green!50!black, rounded corners, fill=green!12, inner sep=5pt},
    thinarrow/.style={->, semithick}
]

\node[inb] (xi) {$\{(\bm{C}_{1h},\,C_{2h},\,C_{3h})\}_{h=1}^{t}$\\\footnotesize posterior tuples};
\node[enc, right=14mm of xi] (phi1) {$\phi_1(\cdot)$\\\footnotesize encoder\\\footnotesize(shared over $h$)};
\node[agg, right=18mm of phi1] (sum) {$\sum$};
\node[inb, right=14mm of sum] (g1) {$g_1(\tilde{\boldsymbol{\xi}}_t)\in\mathbb{R}^{L_1}$};

\draw[thinarrow] (xi) -- node[above,sloped]{\footnotesize apply per tuple} (phi1);
\draw[thinarrow] (phi1) -- (sum);
\draw[thinarrow] (sum) -- node[above]{\footnotesize perm. invariant} (g1);
\node[below=1mm of sum] {\footnotesize sum over $h$};

\node[inb, below=20mm of phi1] (psi) {$\bm{\Psi}_t\in\mathbb{R}^{J\times Q}$\\\footnotesize prediction quantiles};
\node[enc, right=14mm of psi] (phi2) {$\phi_2(\cdot)$\\\footnotesize encoder};
\draw[thinarrow] (psi) -- (phi2);

\node[cat] (concat) at ($ (g1.east)!0.5!(phi2.east) + (18mm,0) $)
    {$[\,g_1;\ \phi_2\,]\in\mathbb{R}^{L_1+L_2}=\mathbb{R}^{L}$};

\draw[thinarrow] (g1) -- (concat);
\draw[thinarrow] (phi2) -- (concat);

\node[cls, right=16mm of concat] (rho) {$\rho(\cdot)$\\\footnotesize classifier};
\node[inb, right=14mm of rho] (logits) {item $j$};

\draw[thinarrow] (concat) -- (rho);
\draw[thinarrow] (rho) -- (logits);

\end{tikzpicture}%
}
\caption{High-level architecture of the Q-network. The shared encoder $\phi_1$ maps each tuple of posterior parameters to $\mathbb{R}^{L_1}$ and the sum yields the permutation invariant representation $g_1(\tilde{\boldsymbol{\xi}}_t)$. The matrix $\bm{\Psi}_t$ is encoded by $\phi_2$. The concatenated vector in $\mathbb{R}^{L}$ is passed to the classifier $\rho$ to select the $j$-th item (largest value in the $J$ logits). This network is trained offline using Algorithm \ref{alg:q-learning}; during live CAT, its weights are fixed and only the posterior state is updated sequentially.}
\label{fig:policy-architecture}
\end{figure}

\subsection{Double Deep Q-learning for CAT} \label{subsec:doubleQ}

\textcolor{black}{In classical tabular Q-learning, the action-value function is stored as a Q-matrix whose rows correspond to states and whose columns correspond to actions. Such a representation is infeasible in CAT because the state space is combinatorially large and the state variable $s_t$ is represented by continuous posterior summaries rather than a small finite index. We therefore replace the tabular Q-matrix by a neural-network approximator $Q_w(s,a)$. For a given state $s_t$, the network in Figure \ref{fig:policy-architecture} outputs the full vector of estimated Q-values before selecting the optimal item, which plays the same role as one row of a tabular Q-matrix, but is produced through function approximation rather than table lookup.}

The standard Q-learning algorithm often overestimates Q-values because the same action-value approximation is used both to select the maximizing action and to evaluate it when constructing the temporal-difference target. To address this issue, we adopt the Double Q-learning approach \citep{van_Hasselt_Guez_Silver_2016}, which replaces the single action-value approximation by two Q-networks with distinct roles. \textcolor{black}{The primary network $Q_w$ is updated by gradient descent and is used to select the action with the largest estimated Q-value, while the target network $Q_{\bar w}$ is a delayed copy used only to evaluate that selected action when forming the target. This is the neural-network analogue of maintaining two Q-matrices in Double Q-learning. The separation reduces overestimation bias and leads to more stable training.} Specifically, compared with standard Q-learning, Double Q-learning minimizes the same TD loss in (\ref{eq:td-loss}) but replaces the target in (\ref{eq:pred-q}) with
\[
y(s,a,r,s') \;=\; r + \gamma\, Q_{\bar{w}}\!\big(s',\, \operatorname*{arg\,max}_{a'} Q_{w}(s',a')\big).
\]

Our proposed algorithm is presented in Algorithm \ref{alg:q-learning}. \textcolor{black}{The algorithm trains, entirely offline, the policy network illustrated in Figure \ref{fig:policy-architecture}. Once training is complete, the learned network is deployed during live CAT with fixed weights, and only the posterior state is updated sequentially as new responses are observed.} Importantly, the offline training phase in Algorithm \ref{alg:q-learning} relies only on the item parameters $(\bm{B},\bm{D})$ and simulated examinees drawn from a specified latent factor distribution (e.g., $\bm{\theta}_i \sim \mathcal{N}(0, \mathbbm{I}_K)$), rather than on observed item response data from the item bank. If prior information about the distribution of future online examinees is available, the simulation distribution can be adapted accordingly. In this manuscript, we consistently use a standard multivariate normal distribution to ensure comparability across methods and experiments.
 
Note that for the Q-learning to converge to an optimal policy, it is essential to adopt an $\epsilon$-greedy policy, where the algorithm makes random item selections with probability  $\epsilon$ and gradually decreases $\epsilon$ over the course of training. For simplicity, the state variable $s_t$ in Algorithm \ref{alg:q-learning} is a compact representation of $(\tilde{\bm{\xi}}_t, \bm{\Psi}_t)$ defined in Section \ref{subsec:network-design}. In practice, we can terminate the training when both the rewards and the validation loss stabilize, indicating that the neural network has well approximated the optimal policy.

\RestyleAlgo{ruled}
\SetKwComment{Comment}{/* }{ */}

\begin{algorithm}[hbt!] 
\caption{Double Q-learning Algorithm for Bayesian CAT (offline)}\label{alg:q-learning}
\KwIn{Item Parameters $(\bm{B},\bm{D})$, total Episodes $E$, $\epsilon \in [\underline{\epsilon},\bar{\epsilon}]$ with $\epsilon^{(0)}= \bar{\epsilon}$, buffer $\mathcal{H}$, target network update frequency $T$, batch size $M$.}
\KwOut{Learned Q-network $Q_w$ for item selection.}
Initialize primary Q-network $Q_w$ and target Q-network $Q_{\bar{w}}$, with weights $w_0 = \bar{w}_0$;  \\
\For{$i=1$ \KwTo $E$}{
    Draw $\bm{\theta}_i$ from $\mathcal{N}(0, \mathbbm{I}_{K})$; \\
    Set initial state $s_0 := [\mathcal{N}(0, \mathbbm{I}_{K}), \bm{\Psi}_0]$ and available item set $R_0 = \{1, \cdots, J\}$; \\
    \For{$t=0$ \KwTo $H$}{  
        Select action $a_t$ using $\epsilon$-greedy policy:
        \[
            a_t =
            \begin{cases} 
            \text{Random Selection} & \text{with probability } \epsilon^{(t)}    \\
            \arg\max_{a \in R_t} Q_w(s_t, a) & \text{otherwise.}
            \end{cases}
        \] \\
       Simulate response $y_t \sim \text{Bernoulli}\!\left(\Phi\!\left(\bm{B}_{a_t}'\bm{\theta}_i + D_{a_t}\right)\right)$; \\
        Update available item set $R_{t+1} = R_t \setminus a_t$; \\
         Obtain new state $s_{t+1}$ from $(\tilde{\bm{\xi}}_{t+1}, \bm{\Psi}_{t+1})$; \\ 
        Compute reward $r_t$; \\
        \If{$r_t = 0$}{
            Break and proceed to the next episode; 
        }
        Store transition $H_{i,t} := (s_t, a_t, r_t, R_t, s_{t+1})$ into buffer $\mathcal{H}$; \\
        \If{$|\mathcal{H}| > \text{batch size } M$}{
            Randomly sample $M$ transitions $\{H_{i,t}\}_{j=1}^{M}$ from $\mathcal{H}$; \\
            Compute $a_j = \arg\max_{a \in R_j} Q_w(s_j', a)$; \\
            Compute target:
            \[
                y^{(j)} =
                \begin{cases} 
                -1 & r_{j} = 0    \\
                -1 + \gamma Q_{\bar{w}}(s_j', a_j) & \text{otherwise.}
                \end{cases}
            \]
            Perform batch gradient descent on loss: $\sum_{j=1}^{M} (y^{(j)} - Q_w(s_j, a_j))^2$;
        }
        Decrease $\epsilon^{(t)}$ if $\epsilon^{(t)} \geq \underline{\epsilon}$;
    }
    Update the target network by setting $\bar w = w$ if $i$ is a multiple of $T$;
}
\end{algorithm}

\section{Simulation} \label{sec:experiments}

\subsection{Simulation Design} \label{subsec:sim-design}
To evaluate our approach in a challenging and realistic simulation setting, we randomly generated a $5$-factor, $150$-item \textcolor{black}{factor loading} matrix $\bm{B} \in \mathbbm R^{150 \times 5}$.  Each column of $\bm{B}$ was initialized by randomly permuting $150$ equally spaced values,  with the magnitude constrained to lie within $[0.3, 3]$. We then imposed a lower triangular structure to ensure identifiability. To better reflect practical datasets, where items rarely load on all five factors, each item was required to load on the first factor and on at most two additional factors beyond the first. A visualization of the loading matrix can be found in Appendix \ref{sec:oracle}. Item intercepts were independently drawn from $\text{Unif}(-1.5, 1.5)$.

The goal of the simulation is to accurately estimate the first three factors (\( \mathcal{K} = \{1,2,3\} \)), while accounting for the presence of factors $4$ and $5$. Leveraging our proposed direct sampling approach outlined in Section \ref{sec:acceleration}, we implemented our double Q-learning algorithm alongside all existing methods discussed in Section \ref{subsec:existing_rules}. These include the EAP approach (Equation \ref{eq:kl}), the Max Pos approach (Equation \ref{eq:sp_kl}), the MI approach (Equation \ref{eq:mi_eq}), and the Max Var approach (Equation \ref{eq:var-pred-mean}). We further modified all baseline information-based criteria so that they also target only the prioritized subset of factors $\mathcal{K}=\{1,2,3\}$. Specifically, the EAP, Max Pos, and MI rules are now written as integrals over $\bm{\theta}_{\mathcal{K}}$ rather than all $5$ factors. These adjustments ensure that all competing CAT algorithms are tuned to the same estimation target, making the simulation comparisons fair and directly comparable.

 For performance evaluation, we generated $N=500$ online examinees with latent traits $\bm{\theta}_i \stackrel{\mathrm{i.i.d.}}{\sim} \mathcal{N}(0, \mathbbm{I}_5)$. For each examinee, we administered a $50$-item test using each of the CAT algorithms. We then compared performance in terms of posterior variance reduction, mean-squared error (MSE), termination efficiency, and item exposure rates based on these $500$ adaptive testing sessions. 

For the Q-learning algorithm, we trained the Q-network for $80,000$ episodes, with exploration parameter $\epsilon$ decreasing linearly from $0.99$ to $0.01$ over $700,000$ steps. We set a sufficiently large $H=60$ items, with discount factor $\gamma=0.95$. To verify that the double Q-learning algorithm indeed converges to the optimal policy $\pi^*$, we provide further details on the training dynamics of our deep Q-learning algorithm, illustrating the increase in rewards until convergence in Appendix \ref{subsec:td}.

\subsection{Simulation Results} \label{subsec:sim-results}

\begin{figure}[t]
    \centering
    \includegraphics[width=0.6\textwidth, height=0.4\textwidth]{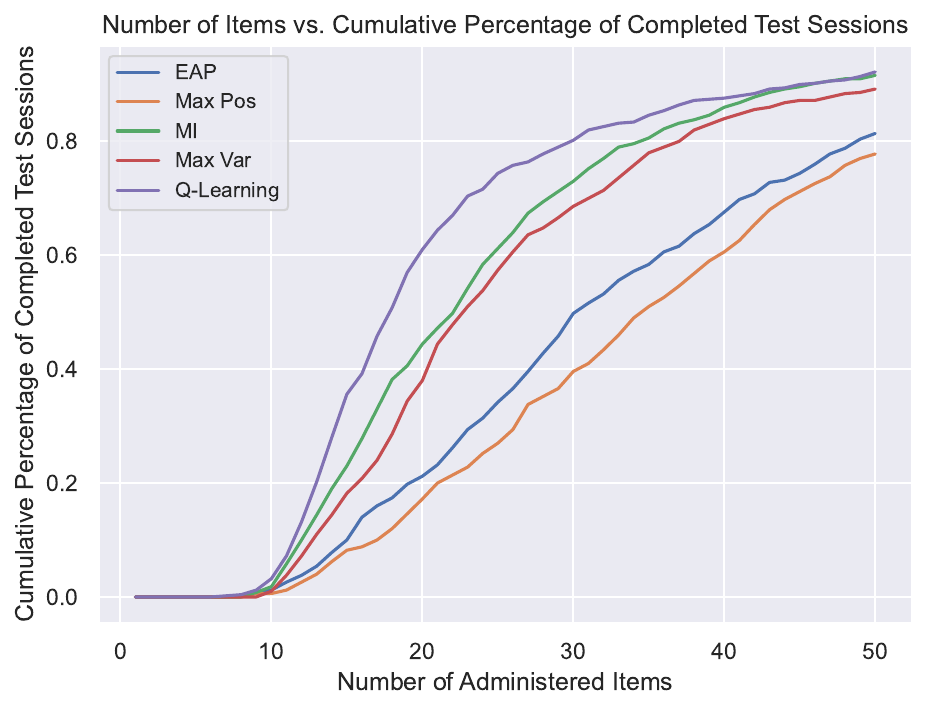}
    \caption{Number of Items Versus Cumulative Percentage of Completed Tests}
    \label{fig:plot1}
\end{figure}

To evaluate termination efficiency, we start with a standard multivariate Gaussian prior with unit marginal variances and terminate the test once the maximum posterior variance across \emph{all three} target factors in $\mathcal{K}$ falls from $1$ to below $\tau^2 < 0.16$. In practice, practitioners may select the threshold $\tau^2$ based on the specific requirements of their application. Based on $500$ simulated adaptive testing sessions, Figure \ref{fig:plot1} illustrates how the percentage of completed test sessions increases as more items are administered. A faster growth rate of the completion percentages indicates faster posterior variance reduction. Notably, the purple Q-learning curve consistently rises more quickly than the others, demonstrating significant testing efficiency gains. As summarized in the second column of Table \ref{tab:mcat_comparison}, Q-learning requires an average of only $21.5$ items to reach the termination criterion for all $3$ targeted factors, outperforming all other methods.

\begin{table}[t]
\centering
\caption{\textbf{Comparison of Win shares (W.S), Termination, and Computation}}
\label{tab:mcat_comparison}
\resizebox{\textwidth}{!}{%
\begin{tabular}{@{}lccccc@{}}
\toprule
\textbf{Algorithm} & \textbf{Avg Termination (items)} & \textbf{W.S dim1} & \textbf{W.S dim2} & \textbf{W.S dim3} & \textbf{Avg Time (s/item)} \\ \midrule
EAP         & 31.7    & 17.0\% & 16.4\% & 13.6\% & \textbf{0.037} \\
Max Pos     & 34.0    & 16.8\% & 16.2\% & 18.6\% & 0.055 \\
MI          & 23.2  & 19.0\% & 21.4\% & 20.2\% & 0.082 \\
Max Var     & 25.7  & 23.6\% & 22.6\% & 23.6\% & 0.038 \\
Q-learning  & \textbf{21.5} & \textbf{23.6}\% & \textbf{23.4\%} & \textbf{24.0\%} & 0.064 \\ \bottomrule
\end{tabular}%
}
\end{table}

To assess early-stage performance, we compute the MSEs after $20$ items for each test session and summarize results using win shares, defined as the percentage of test takers for whom each method achieves the lowest MSE after $20$ items. These win shares are reported across all three latent dimensions in Table \ref{tab:mcat_comparison}. Notably, Q-learning achieves the best win shares across all three factors. This advantage at early stages is consistent with the training objective of Q-learning, which explicitly rewards rapid posterior variance reduction and early test termination, with the average stopping time occurring at around $20$ items.

Table~\ref{tab:mse-factors} reports the evolution of MSEs as a function of test length up to $50$ administered items. Consistent with the win share analysis, Q-learning exhibits competitive performance during the early stages of testing. In particular, at $T=20$ and $T=30$, Q-learning attains the smallest MSEs for Factors $2$ and $3$. As the test length increases, performance differences across methods diminish, and Q-learning achieves MSEs comparable to those of MI and Max Var. This pattern is expected, as the Q-learning policy is trained primarily on short-horizon testing trajectories: since most simulated tests terminate before $30$ items, the algorithm is exposed less frequently to long-horizon simulations during training. Additional visualizations of the MSE trajectories are provided in Appendix~\ref{subsec:matching}.

\begin{table}[t]
\centering
\caption{MSEs between Posterior Mean and Ground Truth for the first three latent factors as a function of test length.}
\label{tab:mse-factors}
\setlength{\tabcolsep}{6pt}
\renewcommand{\arraystretch}{1.1}
\begin{tabular}{llccccc}
\toprule
 & & \multicolumn{5}{c}{Number of items} \\
\cmidrule(lr){3-7}
Factor & Algorithm & 10 & 20 & 30 & 40 & 50 \\
\midrule
\multirow{5}{*}{1}
 & EAP        & 0.175 & 0.123 & 0.090 & 0.073 & 0.062 \\
 & Max Pos     & 0.194 & 0.128 & 0.104 & 0.083 & 0.062 \\
 & MI         & \textbf{0.180} & 0.100 & \textbf{0.066} & \textbf{0.055} & \textbf{0.050} \\
 & Max Var     & 0.186 & \textbf{0.080} & 0.073 & 0.059 & 0.056 \\
 & Q-learning & 0.221 & 0.096 & 0.071 & 0.060 & 0.054 \\
\midrule
\multirow{5}{*}{2}
 & EAP        & 0.362 & 0.214 & 0.164 & 0.125 & 0.105 \\
 & Max Pos     & 0.354 & 0.221 & 0.166 & 0.129 & 0.103 \\
 & MI         & 0.223 & 0.150 & 0.120 & 0.102 & \textbf{0.086} \\
 & Max Var     & \textbf{0.214} & 0.141 & 0.118 & \textbf{0.092} & 0.087 \\
 & Q-learning & 0.282 & \textbf{0.138} & \textbf{0.111} & 0.096 & 0.087 \\
\midrule
\multirow{5}{*}{3}
 & EAP        & 0.317 & 0.226 & 0.152 & 0.112 & 0.085 \\
 & Max Pos     & 0.324 & 0.207 & 0.139 & 0.115 & 0.092 \\
 & MI         & \textbf{0.213} & 0.127 & 0.103 & 0.096 & 0.089 \\
 & Max Var     & 0.234 & 0.136 & 0.110 & 0.099 & 0.096 \\
 & Q-learning & 0.233 & \textbf{0.114} & \textbf{0.098} & \textbf{0.091} & \textbf{0.085} \\
\bottomrule
\end{tabular}
\end{table}

\begin{figure}[t]
    \centering
    \includegraphics[width=0.8\textwidth, height=0.4\textwidth]{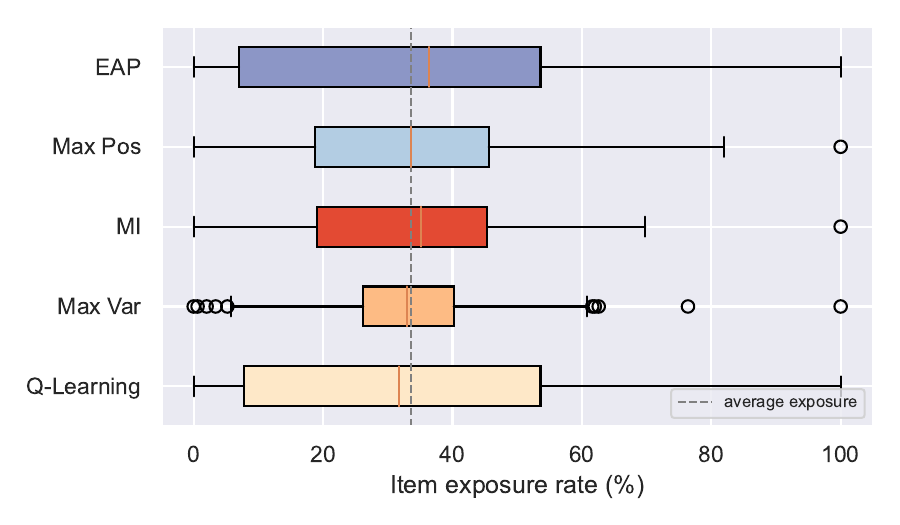}
    \caption{Distributions of Item Exposure Rates}
    \label{fig:exp-rate}
\end{figure}

Figure \ref{fig:exp-rate} summarizes the distribution of item exposure rates across the $150$ item bank for each CAT algorithm, where the exposure rate of an item is defined as the proportion of examinees whose test includes that item at least once. With test length $H=50$ and bank size $J=150$, the expected average exposure is $H/J \approx 0.33$, and all five algorithms yield mean and median exposure rates close to this benchmark. Differences across methods primarily arise in the tails of the exposure distribution. In particular, the EAP and Q-learning policies exhibit slightly higher upper-quartile and maximum exposure rates, indicating a more selective use of highly informative items. Importantly, no method concentrates exposure excessively on a small subset of items, suggesting that all algorithms maintain reasonable item utilization in this simulation setting. For the Q-learning approach, incorporating explicit exposure control mechanisms into the learning objective, such as penalizing repeated use of highly exposed items, is a natural extension but is beyond the scope of the present study.

The last column of Table \ref{tab:mcat_comparison} reports the average online item selection time, highlighting the significant computational advantages of our direct sampling approach. Even the most computationally intensive MI approach requires only $0.082$ seconds per item selection, via our proposed posterior reweighting strategy. Q-learning adds only a single feed-forward pass per selection, keeping online latency under a few milliseconds on standard hardware without GPU acceleration. While training the Q-network offline using Algorithm 1 took approximately $30$ hours for this exercise, this is a one-time investment that can be accelerated if GPUs are available; thereafter, the virtually instantaneous online policy makes the approach highly practical for real‐time CAT deployment.

\section{Cognitive Function Measurements} \label{sec:cat-cog}

\subsection{pCAT-COG Data and Experiment Design} \label{subsec:cat-cog-design}

We revisit the problem of designing a deep CAT system for the pCAT-COG study, as outlined in Section \ref{sec:motivation}. Since item response data for all $N = 730$ examinees across $J = 57$ items are available, we can directly use real item responses during evaluation rather than simulating testing sessions. In this experiment, all $N=730$ examinees are treated as online examinees. All CAT algorithms, including Q-learning, have access only to the estimated item parameters from the pCAT-COG study and do not observe the examinees’ binary item responses beyond those revealed sequentially during adaptive administration.

Given that pCAT-COG is designed to measure global cognitive ability (first column in Figure \ref{fig:cat-cog-loadings}) while accounting for five cognitive subdomains, we specified the Q-learning reward function to reduce posterior variance for the primary factor (\(\mathcal{K} = \{1\}\)). This demonstrates the flexibility of our CAT system, as it can be tailored to the specific cognitive assessment needs. As in Section \ref{sec:experiments}, we also modified all baseline CAT algorithms so that they also target only the primary factor by integrating only over the primary dimension for fair comparison. 

Finally, this real-data experiment also serves as a robustness check for the proposed Q-learning approach. As described in Algorithm \ref{alg:q-learning}, the policy is trained using simulated examinees with latent factors drawn from a standard multivariate normal distribution. In contrast, the empirical factor correlations in pCAT-COG need not follow this distributional assumption. Evaluating Q-learning on real response data therefore provides evidence that the learned policy remains effective when the latent factor structure deviates from the training distribution used in simulation.

\subsection{Results} \label{subsec:cat-cog-results}

The left subplot of Figure \ref{fig:plot4} shows that Q-learning again achieves the fastest test termination compared to other methods. As before, the test is dynamically terminated when the posterior variance drops from $1$ to below \( \tau^2 = 0.16 \), and the cumulative percentage of completed test sessions is computed over all $730$ examinees. The second column of Table \ref{tab:cat-cog-table} shows that Q-learning reaches the desired posterior variance reduction threshold after only an average of $11.2$ items.

For further comparison, we consider adaptively selecting $20$ items for each test taker without dynamic termination. We chose the number $20$ because the Max Pos approach required the largest average of $19.8$ items for termination. The third column of Table \ref{tab:cat-cog-table} shows that the Q-learning approach achieves the highest $0.959$ correlation between the estimated primary factor posterior means and the ground truth after only $20$ items. Additionally, the Q-learning approach attains the highest win shares in estimating the primary dimension across all \( N = 730 \) examinees after $20$ items. As defined in Section \ref{sec:experiments}, win shares denote the proportion of examinees for whom a CAT method attains the smallest mean-squared error. Since all $57$ items are ultimately administered in the real-data study, exposure rates are degenerate under full-length testing. We therefore report item exposure patterns based on the first $30$ administered items in Appendix~\ref{subsec:exposure-cat}.

Additionally, the right subplot of Figure \ref{fig:plot4} highlights the rapid decay of mean squared error (MSE) in estimating the primary dimension with Q-learning. As summarized in Table \ref{tab:realdata-mse-quantile}, Q-learning achieves the smallest MSE at the early stages when $T=10$ and $T=20$. At longer horizons when $T=30$ and $T=40$, the gains in MSE for the Q-learning approach diminish. This behavior is consistent with the simulation findings in Section~\ref{sec:experiments} and reflects the reward specification used for Q-learning, which emphasizes rapid early-stage variance reduction. Because most simulated training trajectories generated in Algorithm~\ref{alg:q-learning} terminate well before 20 items, such performance behavior for Q-learning is expected.

\begin{figure}[t]
    \centering
    \begin{subfigure}{0.48\textwidth}  
        \centering
        \includegraphics[width=\linewidth]{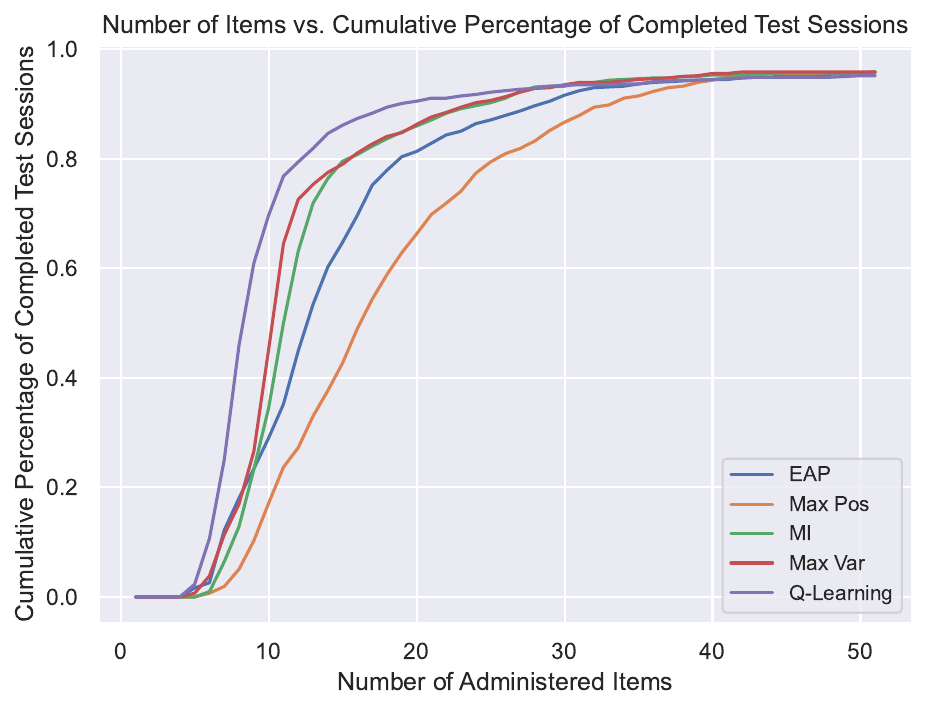}  
        \caption{Test Termination Speed}
    \end{subfigure}
    \hfill  
    \begin{subfigure}{0.5\textwidth}  
        \centering
        \includegraphics[width=\linewidth]{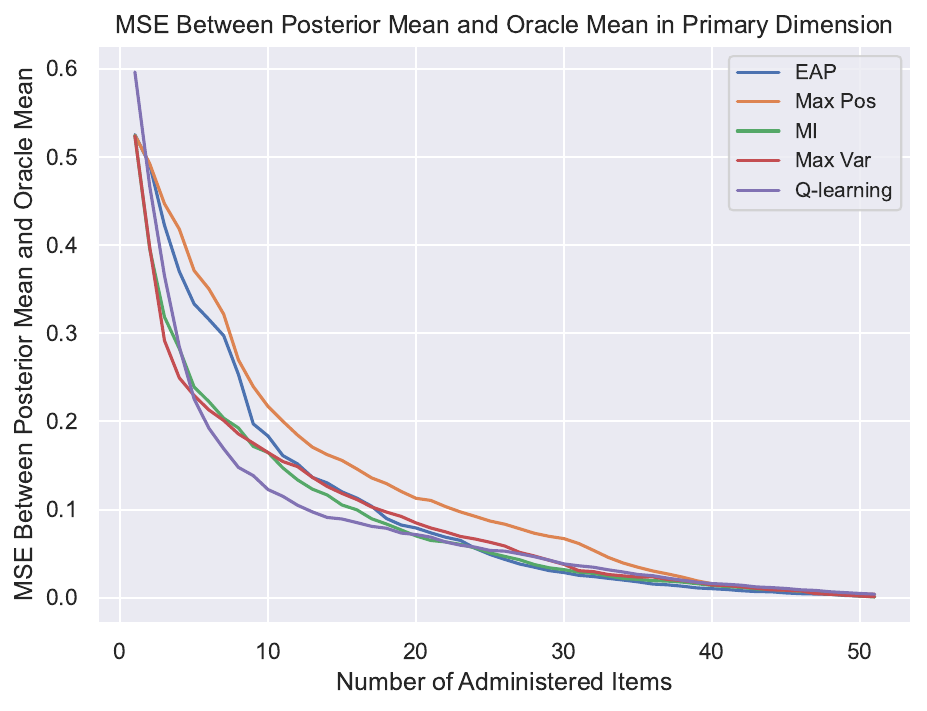}  
        \caption{Test MSE Accuracy}
    \end{subfigure}
    \caption{pCAT-COG: Primary Factor Posterior Variance Reduction (Left) and Estimation Accuracy (Right)}
    \label{fig:plot4}
\end{figure}

\begin{table}[t]
\centering
\caption{\textbf{Comparison of termination efficiency, primary-factor accuracy, and computation for pCAT-COG.}}
\label{tab:cat-cog-table}
\setlength{\tabcolsep}{8pt}
\renewcommand{\arraystretch}{1.15}
\begin{threeparttable}
\begin{tabular}{lcccc}
\toprule
\textbf{Algorithm} 
& \textbf{Avg Termination (items)} 
& \textbf{Correlation} 
& \textbf{Win shares} 
& \textbf{Avg Time (s/item)} \\
\midrule
EAP        & 15.1 & 0.956 & 17.7\% & \textbf{0.020} \\
Max Pos    & 18.5 & 0.933 & 16.7\% & 0.030 \\
MI         & 12.6 & 0.959 & 21.8\% & 0.038 \\
Max Var    & 13.0 & 0.951 & 20.9\% & 0.021 \\
Q-learning & \textbf{11.2} & \textbf{0.959} & \textbf{22.9}\% & 0.025 \\
\bottomrule
\end{tabular}
\begin{tablenotes}[flushleft]
\footnotesize
\item \textit{Note.} Primary factor correlation and win shares (W.S) are computed after 20 administered items. Win shares denote the proportion of examinees for whom a method attains the smallest mean-squared error for the primary factor at this stage.
\end{tablenotes}
\end{threeparttable}
\end{table}

\begin{table}[t]
\centering
\caption{Mean-squared errors (MSEs) as a function of test length}
\label{tab:realdata-mse-quantile}
\setlength{\tabcolsep}{8pt}
\renewcommand{\arraystretch}{1.1}
\begin{tabular}{lcccc}
\toprule
\textbf{Algorithm} & \textbf{10 items} & \textbf{20 items} & \textbf{30 items} & \textbf{40 items} \\
\midrule
EAP        & 0.183 & 0.079 & \textbf{0.029} & \textbf{0.010} \\
Max Pos    & 0.217 & 0.113 & 0.067 & 0.016 \\
MI         & 0.165 & \textbf{0.070} & 0.032 & 0.014 \\
Max Var    & 0.165 & 0.085 & 0.038 & 0.014 \\
Q-learning & \textbf{0.123} & 0.072 & 0.038 & 0.016 \\
\bottomrule
\end{tabular}
\end{table}

This experiment highlights the effectiveness of our Q-learning approach in high-dimensional cognitive function measurement. Unlike approaches that learn a fixed test design offline \citep{Krantsevich2023}, our framework enables fully dynamic item selection, maximizing the utilization of the item bank by exploring diverse testing trajectories. This termination efficiency is 
particularly valuable in cognitive assessment, where item development is costly, and can help mitigate practice effects and preserve items for future use.

\section{Discussion}

This work advocates for a deep reinforcement learning perspective in the design of multidimensional computerized adaptive testing systems. We make two key contributions to the existing CAT literature: (1) a computational advancement that 
accelerates existing online item selection rules within a flexible Bayesian MIRT framework, and (2) a novel reinforcement learning-based approach that mitigates myopic decision-making and prioritizes the assessment of primary factors of interest.

Theorem \ref{thm:3.2} not only provides an efficient parameterization of the latent factor 
posterior distributions for our proposed deep Q-learning approach, but also improves existing CAT
item selection algorithms, as detailed in Section \ref{sec:acceleration}. Leveraging direct sampling from unified skew-normal distributions, our methodology scales efficiently with a large number of factors and items, achieving near-instantaneous online selection by circumventing MCMC sampling and data augmentation. Additionally, our approach naturally extends to fully Bayesian item selection by accounting for uncertainties in item parameters, which is an essential consideration when the item bank is not well-calibrated as detailed in Appendix \ref{sec:fully-bayesian}. Throughout the testing trajectory, our approach precisely characterizes the evolution 
of posterior distributions at each time step, providing a more robust measurement 
process beyond point estimates with difficult-to-compute standard errors.

Another key contribution of this work is the development of a robust deep double Q-learning algorithm with a customized reward structure that directly minimizes test length. As demonstrated in both simulations and real-data studies, our Q-learning algorithm consistently achieves the 
fastest posterior variance reduction while rapidly decreasing estimation bias. Moreover, its  flexible reward function allows adaptation to different testing objectives, providing a  principled framework for designing customized tests and overcoming the myopic nature of traditional CAT methods.

Our work also offers practical guidance for selecting the appropriate item selection algorithms. Unlike the existing heuristic rules, one limitation of our deep CAT system is the requirement of offline training (Algorithm \ref{alg:q-learning}) before online deployment. For the pCAT-COG study, offline training took approximately $12$ hours on a single GPU as a one-time investment, but no GPU is needed for subsequent online item selection. Even when offline training is undesirable, our framework significantly accelerates existing methods. Experiments show that MI (Equation \ref{eq:mi_eq}) and our Max Var (Equation \ref{eq:var-pred-mean}) approaches often outperform other online item selection rules, echoing the findings presented in \cite{selection_survey}.

A promising direction for future work is to address several limitations of the current Q-learning framework for CAT. First, the learned policy is tied to a specific item bank and must be retrained when the items are substantially modified or replenished, which may limit its immediate applicability in settings with frequent item updates. Second, as a model-free approach, Q-learning can be computationally expensive to train, particularly for long-horizon testing scenarios in which optimal policies depend on extended future trajectories. \textcolor{black}{In addition, although the proposed approach does not rely on tabular exploration of the full state space and instead uses function approximation based on a compact posterior representation, scalability remains an important practical consideration as the state space grows exponentially. Our simulation results suggest that the method works well for moderate 150-item banks, but additional methodological development may be useful for extending the approach to substantially larger banks.} Finally, while the proposed method yields reasonable item exposure patterns in our experiments, additional work is needed to incorporate explicit exposure control mechanisms directly into the learning objective. 

Another exciting avenue is to explore alternative reward structures for our Q-learning algorithm. While the \(0\)-\(1\) reward structure is interpretable and stabilizes Q-network training, it provides sparse feedback, which may limit empirical performance. Alternative designs could incorporate intermediate rewards or increased penalization as more items are administered. An interesting theoretical direction is to characterize when reinforcement learning approaches are particularly advantageous compared to myopic methods. This requires careful assumptions about the reward function and the item bank properties, providing deeper insights into the trade-offs between RL and traditional item selection strategies.

\addtolength{\textheight}{-.3in}%

\spacingset{1.0}
\bibliographystyle{unsrt}
\bibliography{example}

\clearpage
\bigskip
\spacingset{1.0}
\begin{center}
{\Large{\textbf{Supplementary Material: Deep Computerized Adaptive Testing}}}
\end{center}

\appendix

\renewcommand{\baselinestretch}{1.3} \normalsize

\section{Mutual Information as Prediction Uncertainties} \label{sec:connection}

We observe that all the heuristic Bayesian item selection methods discussed in this manuscript account for prediction uncertainty. In particular, the mutual information item selection rule can be rewritten as follows: 
\begin{align} \label{eq:mi_connection_to_var}
\arg \max_{j_t \in R_t} I_M (\bm{\theta}, y_{j_t}) &= \arg \max_{j_t \in R_t} \sum_{y_{j_t}=0}^{1} \int_{\bm{\theta}} f(\bm{\theta}, y_{j_t} | \bm{Y}_{1:(t-1)}) \log \frac{f(\bm{\theta}, y_{j_t} | \bm{Y}_{1:(t-1)})}{f(\bm{\theta}| \bm{Y}_{1:(t-1)}) f(y_{j_t} | \bm{Y}_{1:(t-1)})} d\bm{\theta} \notag \\
& = \arg \max_{j_t \in R_t} \sum_{y_{j_t}=0}^{1} \int_{\bm{\theta}} f(y_{j_t}|\bm{\theta}) f(\bm{\theta}|\bm{Y}_{1:(t-1)}) \log \frac{f(\bm{\theta}, y_{j_t} | \bm{Y}_{1:(t-1)})}{f(\bm{\theta}| \bm{Y}_{1:(t-1)}) f(y_{j_t} | \bm{Y}_{1:(t-1)})} d\bm{\theta} \notag \\
& = \arg \max_{j_t \in R_t} \sum_{y_{j_t}=0}^{1} \int_{\bm{\theta}} f(y_{j_t}|\bm{\theta}) f(\bm{\theta}|\bm{Y}_{1:(t-1)}) \log \frac{f(y_{j_t} | \bm{\theta})}{f(y_{j_t}|\bm{Y}_{1:(t-1)})} d\bm{\theta} \notag \\
& = \arg \max_{j_t \in R_t}\int_{\bm{\theta}} [\Phi(\bm{B}_{j_t}'\bm{\theta} + D_{j_t}) \log \frac{\Phi(\bm{B}_{j_t}'\bm{\theta} + D_{j_t})}{c_{j_t}} + \\ \notag
& \hspace*{6em} (1-\Phi(\bm{B}_{j_t}'\bm{\theta} + D_{j_t})) \log \frac{(1-\Phi(\bm{B}_{j_t}'\bm{\theta} + D_{j_t}))}{(1-c_{j_t})}] f(\bm{\theta} | \bm{Y}_{1:(t-1)})  d\bm{\theta}. 
\end{align}
Equation (\ref{eq:mi_connection_to_var}) suggests the mutual information criterion selects the item $j_t$ that has the largest expected KL divergence between the Bernoulli distributions parametrized by $\Phi(\bm{B}_{j_t}'\bm{\theta} + D_{j_t})$ and $c_{j_t}$, weighted by the current posterior $f(\bm{\theta}|\bm{Y}_{1:(t-1)})$. 

Observe that mutual information essentially favors the item $j_t$ that has the largest prediction uncertainties around its prediction mean $c_{j_t}$, quantified by the KL divergence. To compare equation (\ref{eq:mi_connection_to_var}) with equation (\ref{eq:var-pred-mean}), note that we have replace the integral of KL divergence from $c_{j_t}$ with variances $(\Phi(\bm{B}_{j_t}' \bm{\theta} + D_{j_t})-c_{j_t})^2$ in (\ref{eq:var-pred-mean}). This suggests that our proposed approach shares similar theoretical properties with the mutual information criterion, but offers a much simpler formula for online computation, avoiding posterior reweighting and log transformations. Empirically, our proposed item selection rule achieves comparable performance to mutual information but with significantly reduced computational time.

\section{Sequential Sampling in CAT} \label{sec:proof}

\subsection{Proof of Theorem \ref{thm:3.2}} \label{subsec:proof-32}
For any $T \geq 1$, we may write the likelihood of item response after administering $T$ items under the two-parameter probit MIRT model as follows: 
\begin{align*}
    \prod_{t=1}^T \Phi(\bm{B}_{j_t}'\bm{\theta} + D_{j_t})^{y_{j_t}}(1- \Phi(\bm{B}_{j_t}' \bm{\theta} + D_{j_t}))^{1-y_{j_t}} &= \prod_{t=1}^T \Phi \{ (2y_{j_t} -1) (\bm{B}_{j_t}' \bm{\theta} + D_{j_t}) \} \\
     & = \Phi_T\!\left(\bm{C}_1\bm{\theta}+\bm{C}_2;\,\mathbbm{I}_T\right)
\end{align*}
Given the standard multivariate normal prior $\bm{\theta} \sim N(0, \mathbbm{I}_K)$, the posterior can be expressed as follows:
\begin{align*}
    \pi(\bm{\theta}| \bm{Y}_{1:T}, \bm{B}_{1:T}, \bm{D}_{1:T}) & \propto  \phi_K(\bm{\theta}; \bm{0},\mathbbm{I}_K) \Phi_T \{ (\bm{C}_1 \bm{\theta} + \bm{C}_2); \mathbbm{I}_{T} \} \\
    & = \phi_K(\bm{\theta}; \bm{0}, \mathbbm{I}_K) \Phi_T \{ \bm{C}_3^{-1}(\bm{C}_1 \bm{\theta} + \bm{C}_2); \bm{C}_3^{-1} \bm{C}_3^{-1} \} \\
    & = \phi_K(\bm{\theta}; \bm{0}, \mathbbm{I}_K) \Phi_T \{ \bm{C}_3^{-1} \bm{C}_2 + \bm{C}_3^{-1}\bm{C}_1 \bm{\theta} ; \bm{C}_3^{-1} \bm{C}_3^{-1} \}. 
\end{align*}
To identify the exact posterior parameters, we draw back to our attention to the probability kernel of the unified skew-normal distribution as defined in Definition \ref{def:usn}. 
We observe that the posterior density above aligns with the SUN density.
Specifically, if $\boldsymbol{\theta}\sim\operatorname{SUN}_{K,T}(\bm{\mu},\bm{\Omega},\bm{\Delta},\bm{\gamma},\bm{\Gamma})$,
its density is
\[
\phi_{K}\!\big(\boldsymbol{\theta};\bm{\mu},\bm{\Omega}\big)\,
\frac{\Phi_{T}\!\left(\bm{\gamma}+\bm{\Delta}^{\top}\bar{\bm{\Omega}}^{-1}\bm{\Omega}^{-1}(\boldsymbol{\theta}-\bm{\mu})\,;\,
\bm{\Gamma}-\bm{\Delta}^{\top}\bar{\bm{\Omega}}^{-1}\bm{\Delta}\right)}
{\Phi_{T}(\bm{\gamma};\bm{\Gamma})}.
\]
We can interpret this as a product of a $K$-variate Gaussian pdf in $\boldsymbol{\theta}$ and a $T$-variate Gaussian cdf in $\boldsymbol{\theta}$, with normalizing constant $\Phi_{T}(\bm{\gamma};\bm{\Gamma})$, since $\Phi_{T}(\bm{\gamma};\bm{\Gamma})$ is not a function of $\boldsymbol{\theta}$ . But this exactly matches our derived latent-factor posterior $\pi(\boldsymbol{\theta}\mid \bm{Y}_{1:T},\bm{B}_{1:T},\bm{D}_{1:T})$ under the MIRT model. It remains to identify $(\bm{\mu},\bm{\Omega},\bm{\Delta},\bm{\gamma},\bm{\Gamma})$ by matching $\phi_{K}\!\big(\boldsymbol{\theta};\bm{\mu},\bm{\Omega}\big)\Phi_{T}\!\left(\bm{\gamma}+\bm{\Delta}^{\top}\bar{\bm{\Omega}}^{-1}\bm{\Omega}^{-1}(\boldsymbol{\theta}-\bm{\mu})\,;\, \bm{\Gamma}-\bm{\Delta}^{\top}\bar{\bm{\Omega}}^{-1}\bm{\Delta}\right)$ to
$$\phi_{K}\!\big(\boldsymbol{\theta};\mathbf{0},\mathbf{I}_{K}\big)\,
\Phi_{T}\!\left(\bold{C}_{3}^{-1}\bold{C}_{2}+\bold{C}_{3}^{-1}\bold{C}_{1}\boldsymbol{\theta}\,;\,
\bold{C}_{3}^{-1}(\bold{C}_{3}^{-1})^{\top}\right).$$
Matching the pdf part immediately yields $\bm{\mu}_{\text{post}}= 0_{K}$ and $\bm{\Omega}_{\text{post}} = \mathbbm{I}_K$. To match the cdf part, we need to match $\bm{\gamma}+\bm{\Delta}^{\top}\bar{\bm{\Omega}}^{-1}\bm{\Omega}^{-1}(\boldsymbol{\theta}-\bm{\mu})$ to $\bold{C}_{3}^{-1}\bold{C}_{2}+\bold{C}_{3}^{-1}\bold{C}_{1}\boldsymbol{\theta}$, which implies $\bm{\gamma}_{\text{post}} = \bm{C}_3^{-1} \bm{C}_2$. 

It remains to solve the following two linear equations:
\begin{align*}
   & \bm{\Delta}' \bar{\bm{\Omega}}^{-1} \bm{\Omega}^{-1} = \bm{C}_3^{-1} \bm{C}_1 \\
   &\bm{\Gamma} - \bm{\Delta}' \bar{\bm{\Omega}}^{-1} \bm{\Delta} = \bm{C}_3^{-1} \bm{C}_3^{-1}.
\end{align*}

Recall that in definition \ref{def:usn}, we let $\bm{\Omega} = \bm{\omega}\bar{\bm{\Omega}}\bm{\omega}$ be the decomposition of covariance matrix into correlation matrix. Since $\bm{\Omega}_{\text{post}} = \mathbbm{I}_K$, we have $\bar{\bm{\Omega}} = \bm{\Omega} = \mathbbm{I}_K$. 
It follows that solving the first equation yields $\bm{\Delta}_{\text{post}} = \bm{C}_1'  \bm{C}_3^{-1}$. Plugging our solution of $\bm{\Delta}_{\text{post}}$ in the second solution yields $$\bm{\Gamma}_{\text{post}} = \bm{C}_3^{-1} \bm{C}_3^{-1} + \bm{C}_3^{-1} \bm{C}_1 \bm{C}_1' \bm{C}_3^{-1} = \bm{C}_3^{-1}(\bm{C}_1\bm{C}_1' + \mathbbm{I}_T) \bm{C}_3^{-1}.$$

Finally, to show this is indeed a unified skew-normal distribution, we need to show the matrix $\bm{\Omega}^*$ as defined in Definition \ref{def:usn}  is indeed a full-rank correlation matrix. To see this, we may decompose $\bm{\Omega}^*$ as follows:

$$
\begin{bmatrix} \bm{C}_3^{-1}(\bm{C}_1\bm{C}_1' + \mathbbm{I}_T)\bm{C}_3^{-1} & \bm{C}_3^{-1} \bm{C}_1 \\
                 \bm{C}_1'\bm{C}_3^{-1} & \mathbbm{I}_K \\
                  \end{bmatrix}  = \begin{bmatrix} \bm{C}_3^{-1} & 0 \\
                  0 & \mathbbm{I}_K\\
   \end{bmatrix}  \times \begin{bmatrix} \bm{C}_1\bm{C}_1' + \mathbbm{I}_T  & \bm{C}_1  \\
                  \bm{C}_1' & \mathbbm{I}_K\\
   \end{bmatrix} 
   \times \begin{bmatrix} \bm{C}_3^{-1} & 0 \\
                  0 & \mathbbm{I}_K\\
   \end{bmatrix}. 
$$

Observe that this is a decomposition of a correlation matrix, where the middle matrix above is the covariance matrix of a $(T+K)$ dimensional random vector $[\bm{z}_1', \bm{z}_2' ]'$, where $\bm{z}_1= \bm{C}_1 \bm{z}_2 + \bm{\epsilon}$, $\bm{z}_2$ is a $K$-dimensional standard multivariate Gaussian vector with identity covariance matrix, and $\bm{\epsilon}$ is a $T$-dimensional standard multivariate Gaussian vector, independent of $\bm{z}_2$.

\subsection{Sequential Sampling Algorithms} \label{subsec:sequential-sampling}
For any Bayesian item selection algorithms, including our proposed deep CAT framework, it is essential to sample from the latent factor posterior distributions $f(\bm{\theta}| \bm{Y}_{1:T})$ for any $T\geq 1$. Theorem \ref{thm:3.2} provides a direct sampling approach to perform such sequential sampling, hence circumventing the need for MCMC algorithms, which cannot be parallelized and require additional tuning and mixing time.

The key observation is that any unified skew-normal distribution $\bm{\theta} \sim \operatorname{SUN}_{K, T}(\bm{\mu}, \bm{\Omega}, \bm{\Delta}, \bm{\gamma}, \bm{\Gamma})$  has the stochastic representation as $\bm{\theta} \overset{d}{=} \bm{\mu} + \bm{\omega} (\bm{V}_0 + \bm{\Delta} \bm{\Gamma}^{-1} \bm{V}_{1, - \bm{\gamma}}),$
where $\bm{V}_0 \sim N(0, \bar{\bm{\Omega}} - \bm{\Delta} \bm{\Gamma}^{-1} \bm{\Delta}') \in \mathbbm R^{K}$ and $V_{1, -\bm{\gamma}}$ is obtained by component-wise truncation below $-\bm{\gamma}$ of a variate $N(0, \bm{\Gamma}) \in \mathbbm R^{T}$. By plugging in the posterior parameters derived in Theorem \ref{thm:3.2}, we have the representation
$$\bm{\theta} \sim \bm{V}_0 + \bm{C}_1'(\bm{C}_1\bm{C}_1' + \mathbbm{I}_T)^{-1} \bm{C}_3 \bm{V}_{1, -\bm{\gamma}}.$$
Hence we can conduct efficient sequential posterior sampling using the following three steps when $t>1$:

\begin{itemize}
    \item \textbf{Step One}: Sample from multivariate normal distribution $$\bm{V}_0 \sim N\!\left(0, \mathbbm{I}_K-\bm{C}_1'(\bm{C}_1\bm{C}_1'+ \mathbbm{I}_T)^{-1}\bm{C}_1\right).$$
    \item  \textbf{Step Two}: Leveraging the minimax tilting method \cite{Botev_2016}, sample from the zero-mean $T$-variate truncated multivariate normal distribution $\bm{V}_{1, -\bm{\gamma}}$ with covariance matrix $\bm{C}_3^{-1}(\bm{C}_1\bm{C}_1'+ \mathbbm{I}_T)\bm{C}_3^{-1}$ and truncation below -$\bm{C}_3^{-1}\bm{C}_2$.
    \item \textbf{Step Three}: perform linear computation $\bm{V}_0 + \bm{C}_1'(\bm{C}_1\bm{C}_1' + \mathbbm{I}_T)^{-1} \bm{C}_3 V_{1,-\bm{\gamma}}.$
\end{itemize}

Note for $t=1$, the sampling steps above can be modified into their univariate analogue. Since steps one and step two are independent, one can simply perform i.i.d sampling from both $\bm{V}_0$ and $\bm{V}_{1,-\bm{\gamma}}$ in parallel to draw large samples of $\bm{\theta} \sim f(\bm{\theta} | \bm{Y}_{1:T})$ in the probit MIRT model. The $t \times t$ matrix inverse $(\bm{C}_1\bm{C}_1' + \mathbbm{I}_T)$ is easy to compute and only needs to be computed once.

\section{Fully Bayesian Item Selection} \label{sec:fully-bayesian}

\subsection{Generalizing to Fully Bayesian Item Selection} \label{subsec:enable_fully_bayes}

Another key advantage of our direct sampling approach is its ability to enable rapid fully Bayesian item selection. Since Section \ref{sec:problem-formulation}, we have assumed that the item bank is well-calibrated, allowing the factor loading matrix and the intercept to be treated as fixed. According to \cite{fully_bayesian}, this assumption can lead to overly optimistic estimation of the latent traits in practice, and is only reasonable when large calibration datasets are available. A fully Bayesian framework offers a principled way to incorporate this uncertainty by integrating over the joint posterior of item parameters and latent traits. However, performing this integration in online item selection has long been considered intractable, limiting its practical adoption. Notably, \cite{fully_bayesian} introduced an efficient MCMC method, but only for unidimensional item response theory model.

Our direct sampling approach extends fully Bayesian item selection to multidimensional IRT while avoiding the computational bottlenecks of MCMC. To illustrate, we consider the mutual information criterion in (\ref{eq:mi-rewrite}). Fully Bayesian inference involves obtaining posterior samples of item parameters $\{\bm{\xi}^{(m)}\}_{m=1}^M$, where $\bm{\xi}^{(m)} := (\bm{B}^{(m)}, \bm{D}^{(m)})$. The mutual information criterion in Equation (\ref{eq:mi-rewrite}) requires computing the posterior predictive probability:
\begin{equation} \label{eq:bayesian-pred}
    f(y_{j_t} | \bm{Y}_{1:(t-1)}) = \int_{\bm{\xi}} \int_{\bm{\theta}} f(y_{j_t}|\bm{\theta})f(\bm{\theta}| \bm{\xi},\bm{Y}_{1:(t-1)}) d\bm{\theta} d\bm{\xi}.
\end{equation}
Similarly, the KL divergence term must integrate out the nuisance item parameters $\bm{\xi}$: 
\begin{equation} \label{eq:bayesian-kl}
    \int_{\bm{\xi}} \int_{\bm{\theta}} f(\bm{\theta}| \bm{Y}_{1:(t-1)}, y_{j_t} , \bm{\xi}) \log \frac{f(y_{j_t} | \bm{\theta})}{f(y_{j_t}|\bm{Y}_{1:(t-1)})} d\bm{\theta} d\bm{\xi}.
\end{equation}
By leveraging Theorem \ref{thm:3.2}, we can exactly identify the posterior distributions for any arbitrary configuration of $f(\bm{\theta}| \bm{Y}_{1:(t-1)}, y_{j_t},  \bm{\xi}^{(m)})$, hence enabling direct sampling in parallel. It follows Equations (\ref{eq:bayesian-pred}) and (\ref{eq:bayesian-kl}) are computationally feasible via Monte Carlo integration.

In contrast, a standard MCMC approach requires constructing $M$ independent Markov chains with additional data augmentation techniques \citep{f3928584-48a1-3ab1-b404-2ec9ffdf51bb, Polson01122013}, each targeting $f(\bm{\theta} | \bm{Y}_{1:(t-1)}, y_{j_t} , \bm{\xi}^{(m)})$, with sampling only possible after all chains have reached convergence. This introduces a significant computational burden, making MCMC impractical for real-time item selection.

\label{subsec:fully_bayesian_simulation}

We evaluate the effectiveness of our direct sampling approach in enabling fully Bayesian online item selection. Unlike traditional methods that treat item parameters as fixed, the fully Bayesian approach explicitly accounts for their uncertainties, which can be substantial when the item response dataset is small or poorly calibrated. To illustrate, we generate a binary item response dataset $\bm{Y}$ with $N = 500$ examinees and $J = 150$ items under a 3-factor probit MIRT model. The true factor loading matrix $\bm{B}$ has dimensions $J \times 3$, and the intercept vector $\bm{D}$ has $150$ elements.

To integrate item parameter uncertainty into online item selection, we fit a Gibbs sampler to the $N \times J$ item response data. We generated $5,000$ MCMC draws, retaining the last $500$ posterior samples of the item parameters $\bm{\Xi} = \{\bm{\xi}^{(m)}\}_{m=1}^{500}$ after burn-in. As described in Section \ref{subsec:enable_fully_bayes}, the fully Bayesian approach marginalizes over the joint distributions of the nuisance item parameters $\bm{\Xi}$ and latent traits. We then implemented all item selection rules depicted in Sections \ref{subsec:existing_rules} in a fully Bayesian manner.

Figure \ref{fig:plotb1} illustrates the cumulative percentage of completed test sessions across all $500$ simulated examinees as more items are administered. The test was dynamically terminated when the posterior standard deviations across all three factors fell below $0.4$. Consistent with the experimental results in Section \ref{sec:experiments} of the manuscript, the mutual information method and our proposed Max Var method demonstrated superior posterior variance reduction. The second column of Table \ref{tab:tb1} shows that the mutual information method terminated after an average of $18.7$ items, whereas the EAP approach required $24.4$ items.

\begin{figure}[t]
    \centering
    \includegraphics[width=0.8\textwidth, height=0.6\textwidth]{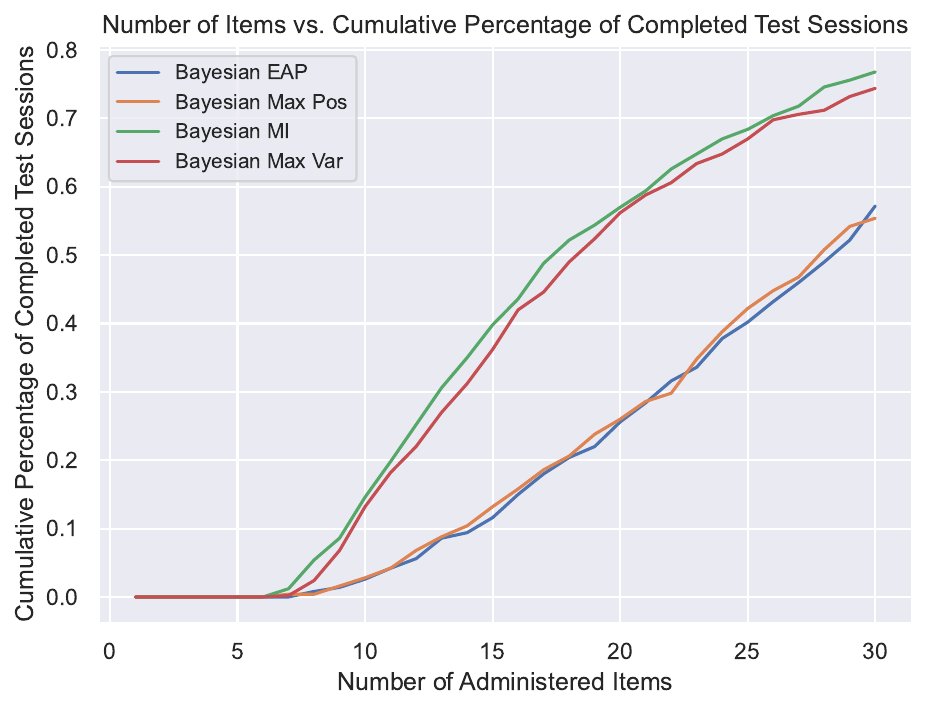}
    \caption{Fully Bayesian: Number of Items Versus Posterior Variance Reduction}
    \label{fig:plotb1}
\end{figure}

\begin{table}[t]
\centering
\caption{\textbf{Comparison of Winshares (W.S), Termination, and Computation}}
\label{tab:tb1}
\resizebox{\textwidth}{!}{%
\begin{tabular}{@{}lccccc@{}}
\toprule
\textbf{Algorithm} & \textbf{Avg Termination (items)} & \textbf{W.S dim0} & \textbf{W.S dim1} & \textbf{W.S dim2} & \textbf{Avg Time (s/item)} \\ \midrule
EAP         & 24.4    & 22.2\% & 19.2\% & 21.8\% & 4.10 \\
Max Pos     & 24.2    & 17.4\% & 18.0\% & 21.0\% & 4.10 \\
MI          & \textbf{18.7}  & \textbf{32.8\%} & 31.0\% & 27.8\% & 4.11 \\
Max Var     & 19.2  & 27.6\% & \textbf{31.8\%} & \textbf{29.4\%} & 4.10  \\
\bottomrule
\end{tabular}%
}
\end{table}

Beyond variance reduction, the mutual information and maximizing prediction variance methods also exhibited superior estimation accuracy, as shown in Figure \ref{fig:plotb2}, which tracks the decline in mean squared error (MSE) between the posterior mean and the oracle posterior mean. Since estimation errors stabilized around $H=30$ items, we compared MSEs between the posterior mean at $H=20$ and the oracle posterior mean, and then summarized the win shares for each item selection rule across all $500$ examinees and all three dimensions in Table \ref{tab:tb1}.

\begin{figure}[t]
    \centering
    \includegraphics[width=1\textwidth, height=0.4\textwidth]{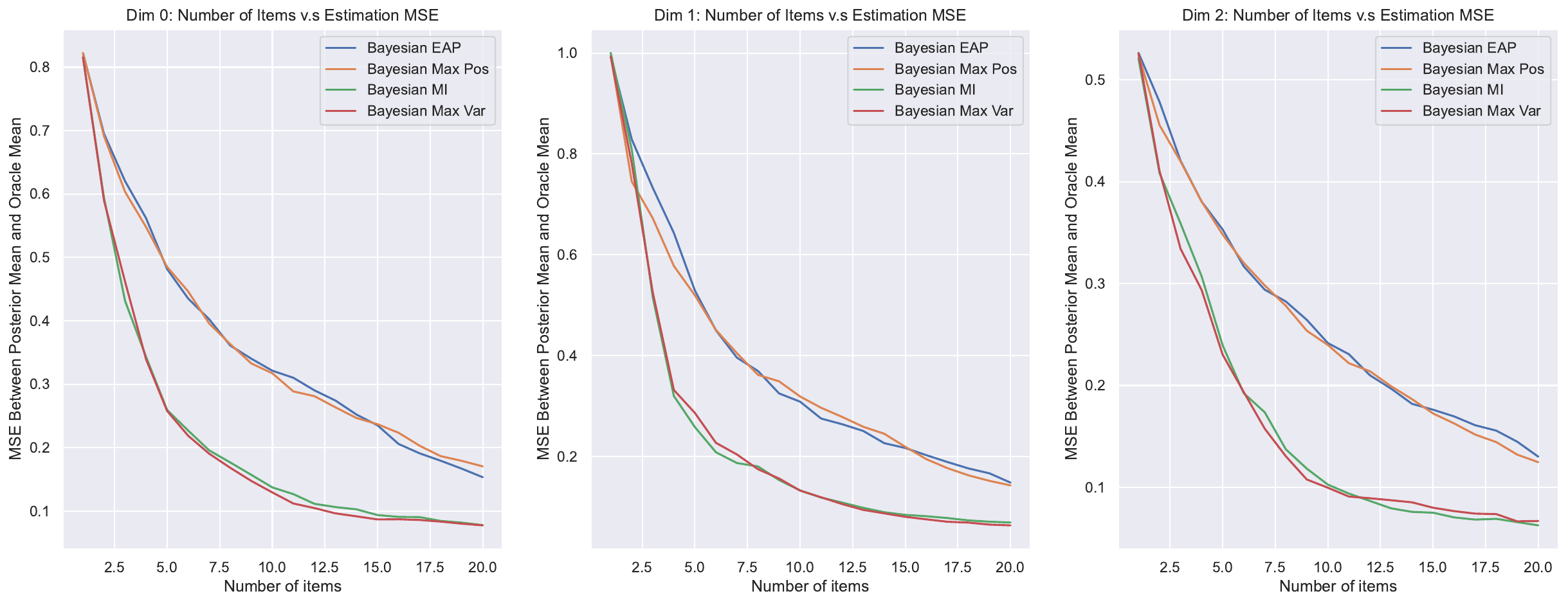}
    \caption{Fully Bayesian: MSE Between Posterior Mean and Oracle Posterior Mean}
    \label{fig:plotb2}
\end{figure}

Finally, we emphasize that even in a fully Bayesian framework, where nuisance parameters are integrated rather than fixed, our approach remains computationally efficient. Across all item selection rules, online selection required an average of just $4.1$ seconds per test session on a personal laptop, highlighting the scalability of our direct sampling method for Bayesian MCAT applications. Readers may notice that $4.1$ seconds is significantly longer than the item selection time reported in the experimentation section, where item parameters are treated as fixed. This discrepancy arises because fully Bayesian MCAT requires sampling from $M=500$ distinct unified-skew normal distributions for each item selection, whereas the standard case involves sampling from a single fixed distribution. This also explains the minimal differences in item selection time across algorithms, as the primary computational bottleneck lies in sampling from these distributions. When more CPUs are available, we can certainly sample from all $M$ distributions in parallel, and hence further improving the $4.1$ seconds benchmark.

\section{Neural Network Design Details} \label{sec:network_details}

Both the primary and target Q–networks share the same modular feed-forward design.  At each time $t$, the posterior parameters $\tilde{\bm{\xi}}_t$ are split into two streams: a permutation-invariant path $\phi_1$ (three $256$-unit ReLU layers) that processes the main state input.  In parallel, a RowwiseNetwork $\phi_2()$ network applies independently to each of the $J$ prediction-quartile feature vectors yields item-wise predictive values, which are then mapped to $256$ dimensional features.  The two $256$-dimensional vectors are then merged by a small combiner MLP, which is then mapped to the final layers with $J$ nodes through the classification network $\rho$.  All linear layers use Xavier initialization; activations are ReLU; the final output is masked to \(-\infty\) for unavailable items.

\begin{table}[ht]
\centering
\caption{Layer‐by‐layer specification of \texttt{OnlineQNetworkV} (and its submodules).}
\label{tab:network-details}
\begin{tabular}{llll}
\toprule
\textbf{Module} & \textbf{Input Dim} & \textbf{Output Dim} & \textbf{Activation} \\
\midrule
\(\phi_1\) (Phi1Network) & \(S\)                     & 256                & ReLU \\
                         & 256                       & 256                & ReLU \\
                         & 256                       & 256                & ReLU \\[0.5em]

\(\phi_2\) (RowwiseNetwork) (for $\bm{\Psi}_t$) & 11                    & 256                  & ReLU \\
                               & 256                   & 256                  & ReLU \\
                               & 256                   & 1                    & Linear
                               \(\times J\) rows \\[0.5em]
                         
\(\phi_2\) (SmallPhi2Network) & \(J\)         &    256               & ReLU \\
                              & 256                 &  256             & ReLU \\
                              & 256                &  256                & Linear \\[0.5em]
\(\rho\)  (Classification)          & 256                & \(J\)                 & ReLU \\
                            & \(J\)                   & \(J\)                 & Linear \\
\bottomrule
\end{tabular}

\medskip
\noindent
\textbf{Notes:}
\begin{itemize}
  \item \(S=\) total scalar‐state dimension; Each new item selection introduced $(K+2)$ dimensional tuple 
  \item The RowwiseNetwork applies its three layers \emph{independently} to each of the \(J\) rows of the \((J\times 11)\) $\bm{\Psi}_t$ matrix. There are $11$ columns, because we include mean, variance, and the $10\%$ quantile increments from $10 \%$ to $90 \%$. Because we process each row independently, the output for the RowwiseNetwork is $J$ dimensional.
  \item All weight matrices are initialized with Xavier uniform; no dropout is used.
  \item Optimizer: Adam with learning rate \(10^{-4}\), \(\beta_1=0.9\), \(\beta_2=0.999\).
\end{itemize}
\end{table}

\section{Discussion on the $0$-$1$ Reward} \label{sec:greedy-proof}

Recall that our choice of reward function is defined in Equation (\ref{eq:reward}). The simple \( 0 \)-\( 1 \) reward structure offers several advantages. It is interpretable as it directly minimizes the number of items required to terminate assessment tasks and prioritizes the main factors of interest. Additionally, it has demonstrated success in related applications, such as in designing deep adaptive learning system \citep{doi:10.3102/10769986221129847}. Since the cumulative reward is constrained to integer values between \( -H \) and \( -1 \), the action-value function can be more reliably approximated by the neural network.

Our double Q-learning algorithm enjoys theoretical convergence guarantees to the optimal policy \citep{van_Hasselt_Guez_Silver_2016} and can accommodate arbitrary reward specifications. It is known that myopic policy cannot outperform the principled optimal policy obtained through RL. However, under certain reward structures, a greedy one-step lookahead (myopic) policy can perform surprisingly well.

In \cite{chan2009stochastic}, it is shown that for stochastic depletion problems, a myopic policy achieves at least  \( 50\% \) of the expected reward of the optimal policy. A similar result can be proven 
for CAT under specific reward structures. Although this can be interpreted as a somewhat negative result for the reinforcement learning approach, we emphasize our choice of the $0$-$1$ reward does not conform to the stochastic depletion problem, and hence is not applicable to the theorem below.

To illustrate this, we extend the state space 
\( \mathcal{S} \) with a time variable, defining \( \overline{\mathcal{S}} = 
\{(f(\bm{\theta} \mid \bm{Y}_{1:T}), t) : t \in [H] \} \). Let $\pi^*(\overline{s})$ represent the optimal policy, and $\pi^m(\overline{s})$ the myopic policy that maximizes the next-step expected reward. Define the mapping $\hat{G}_{y_{a}}: \overline{s} \rightarrow \overline{s'}$ as the posterior update after observing the binary response random variable $y_{a}$, with an incremented time step, where $\overline{s} = (f(\bm{\theta}|\bm{Y}_{1:T}), t)$ and $\overline{s'} = (f(\bm{\theta}|\bm{Y}_{1:T}, y_a), t+1)$. Similarly, define $\tilde{G}_{y_{a}}: \overline{s} \rightarrow \overline{s'}$ as the posterior update after observing $y_{a}$ while keeping the time step fixed, where $\overline{s} = (f(\bm{\theta}|\bm{Y}_{1:T}), t)$ and $\overline{s'} = (f(\bm{\theta}| \bm{Y}_{1:T}, y_{a}), t)$.

To demonstrate the near-optimality of the myopic policy, \cite{chan2009stochastic} 
introduces two key assumptions: ``value function monotonicity" and ``immediate rewards". These assumptions can be adapted to the CAT setting as follows:

\begin{itemize}
    \item \textbf{Immediate Rewards:} for any time stamp $t<H$, suppose the myopic policy $\pi^m$ chooses item $a$, then we must have
    \begin{equation} \label{eq:ir-property}
        V_{\pi^*}(\overline{s}) \leq E[R(\overline{s}, a, \bar{s}' )] + V_{\pi^*} (\tilde{G}_{y_a} (\overline{s})).
    \end{equation}
    Essentially, the property says given a free item (without increasing the time step) chosen by the myopic policy would not deteriorate the optimal rewards.

    \item \textbf{Value Function Monotonicity:} for any time stamp $t<H$, suppose the optimal policy would choose item $a_1$ yet the myopic policy would chooses item $a_2$, then
    \begin{equation} \label{eq:value-mono}
        V_{\pi^*}(\hat{G}_{y_{a_2}} (\tilde{G}_{y_{a_1}}(\overline{s}))) \leq V_{\pi^*} (\hat{G}_{y_{a_2}}(\overline{s})).
    \end{equation}
    Note both sides of inequality refer to the optimal value function evaluated at the posterior distribution at time step $t+1$.
\end{itemize}
In Theorem \ref{thm:myopic-optimality}, we establish that a reinforcement learning approach to CAT cannot significantly outperform the one-step-lookahead greedy policy when the reward function satisfies the two assumptions outlined above. 

\begin{theorem} \label{thm:myopic-optimality}
Consider the finite horizon CAT problem with $H$ steps. Suppose the assumptions of immediate rewards in (\ref{eq:ir-property}) and value function monotonicity in (\ref{eq:value-mono}) hold, we have $V_{\pi^*}(\overline{s}) \leq 2 V_{\pi^m}(\overline{s})$ for all $\overline{s} \in \overline{\mathcal{S}}$.    
\end{theorem}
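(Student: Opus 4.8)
The plan is to prove the bound by backward induction on the time index $t$, taking as inductive hypothesis that $V_{\pi^*}(\overline{s})\le 2V_{\pi^m}(\overline{s})$ holds at every state of time $t+1$. The base case is $t=H$, where no further items can be administered, so both value functions collapse to the same terminal value and the inequality holds with equality. For the inductive step at a state $\overline{s}=(f(\bm{\theta}\mid Y_{1:t}),t)$, I would let $a_2=\pi^m(\overline{s})$ be the myopic action and write the myopic Bellman recursion $V_{\pi^m}(\overline{s})=\mathbb{E}[R(\overline{s},a_2,\cdot)]+\mathbb{E}[V_{\pi^m}(\hat{G}_{y_{a_2}}(\overline{s}))]$, the expectation being over the binary response $y_{a_2}$. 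Applying the inductive hypothesis to the time-$(t+1)$ state $\hat{G}_{y_{a_2}}(\overline{s})$ gives $2V_{\pi^m}(\overline{s})\ge 2\,\mathbb{E}[R(\overline{s},a_2,\cdot)]+\mathbb{E}[V_{\pi^*}(\hat{G}_{y_{a_2}}(\overline{s}))]$, which isolates the target quantity on the right.

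With this reduction it suffices to establish the key inequality $V_{\pi^*}(\overline{s})\le 2\,\mathbb{E}[R(\overline{s},a_2,\cdot)]+\mathbb{E}[V_{\pi^*}(\hat{G}_{y_{a_2}}(\overline{s}))]$, since chaining it with the previous display yields $V_{\pi^*}(\overline{s})\le 2V_{\pi^m}(\overline{s})$. Here the two assumptions enter. The immediate-rewards property (\ref{eq:ir-property}), applied with the myopic action $a_2$, supplies one copy of the myopic reward: $V_{\pi^*}(\overline{s})\le \mathbb{E}[R(\overline{s},a_2,\cdot)]+\mathbb{E}[V_{\pi^*}(\tilde{G}_{y_{a_2}}(\overline{s}))]$. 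Comparing with the key inequality, the residual task is the ``one free period'' bound $\mathbb{E}[V_{\pi^*}(\tilde{G}_{y_{a_2}}(\overline{s}))]\le \mathbb{E}[R(\overline{s},a_2,\cdot)]+\mathbb{E}[V_{\pi^*}(\hat{G}_{y_{a_2}}(\overline{s}))]$. Because $\tilde{G}_{y_{a_2}}$ and $\hat{G}_{y_{a_2}}$ produce the identical posterior and differ only in whether the clock advances, the extra budget enjoyed under the free update should be worth no more than a single additional one-step reward; this is exactly where the second copy of $\mathbb{E}[R(\overline{s},a_2,\cdot)]$, and hence the factor of two, originates. I would prove it by unfolding the optimal policy one step from $\tilde{G}_{y_{a_2}}(\overline{s})$ and invoking value-function monotonicity (\ref{eq:value-mono}) to discard the extra free depletion, bounding the time-$t$ continuation by the time-$(t+1)$ continuation plus at most one maximal one-step reward.

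The main obstacle is precisely this ``free period'' inequality. Assumption (\ref{eq:value-mono}) is phrased for the specific pairing in which the optimal action is the cost-free update and the myopic action is the time-advancing update, so the argument must be arranged so that exactly this configuration surfaces when the optimal policy is unfolded from the free-updated state, or the monotonicity must be extended to the relevant pair of actions by the same depletion reasoning. Careful bookkeeping of the expectations over the stochastic responses $y_{a}$, together with the fact that $\pi^m$ maximizes the one-step expected reward, is needed to keep the telescoping valid at every node of the induction.

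Finally, I would record the interpretive caveat consistent with the surrounding remark: the conclusion is meaningful only under reward structures genuinely satisfying both premises, and the $0$--$1$ reward of (\ref{eq:reward}) violates the immediate-rewards assumption (\ref{eq:ir-property}). Thus the theorem certifies near-optimality of the greedy rule for stochastic-depletion-type rewards but not for the test-length-minimizing reward actually deployed in our CAT framework, which is the point the example is meant to underscore.
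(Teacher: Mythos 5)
Your overall scaffolding---backward induction, the base case where the two policies coincide, and the reduction to the key inequality $V_{\pi^*}(\overline{s})\le 2\,\mathbb{E}[R(\overline{s},a_2,\overline{s}')]+\mathbb{E}[V_{\pi^*}(\hat{G}_{y_{a_2}}(\overline{s}))]$---matches the paper's intermediate target exactly. The gap is in how you reach that inequality. After applying (\ref{eq:ir-property}) at $\overline{s}$ you are left with the ``free period'' bound $\mathbb{E}[V_{\pi^*}(\tilde{G}_{y_{a_2}}(\overline{s}))]\le \mathbb{E}[R(\overline{s},a_2,\overline{s}')]+\mathbb{E}[V_{\pi^*}(\hat{G}_{y_{a_2}}(\overline{s}))]$, which you flag as the main obstacle but do not prove, and which does not follow from the two stated assumptions. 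Unfolding the optimal policy from $\tilde{G}_{y_{a_2}}(\overline{s})$ produces (a) a reward term $\mathbb{E}[R(\tilde{G}_{y_{a_2}}(\overline{s}),b,\cdot)]$ for the action $b$ that is optimal at the \emph{updated} posterior, which cannot be compared to $\mathbb{E}[R(\overline{s},a_2,\cdot)]$ via the myopic optimality of $a_2$ because the two rewards are evaluated under different posteriors; and (b) a continuation value at $\hat{G}_{y_b}(\tilde{G}_{y_{a_2}}(\overline{s}))$, in which the cost-free update is by the myopic item and the clock-advancing update is by $b$---the mirror image of the configuration in (\ref{eq:value-mono}), which discards a free update by the \emph{optimal} action while retaining the advancing update by the \emph{myopic} one.

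The paper's proof avoids both problems by ordering the steps differently: it first unfolds $V_{\pi^*}(\overline{s})$ along the optimal action $a_1$, then uses the immediate-rewards property to insert a free $a_2$ downstream, commutes $\tilde{G}_{y_{a_2}}\circ\hat{G}_{y_{a_1}}=\hat{G}_{y_{a_1}}\circ\tilde{G}_{y_{a_2}}$ (same posterior, same clock), and applies (\ref{eq:value-mono}) in precisely its stated configuration to drop the free $a_1$ update, before invoking the induction hypothesis at $\hat{G}_{y_{a_2}}(\overline{s})$. This leaves the sum $\mathbb{E}[R(\overline{s},a_1,\cdot)]+\mathbb{E}[R(\overline{s},a_2,\cdot)]$ with both rewards evaluated at the \emph{same} state $\overline{s}$, so the one-step optimality of $a_2$ legitimately yields $\mathbb{E}[R(\overline{s},a_1,\cdot)]\le\mathbb{E}[R(\overline{s},a_2,\cdot)]$ and hence the factor of two. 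To repair your argument you would either need to rearrange it into this order or add, and justify, a strengthened monotonicity and reward-comparison assumption covering the swapped configuration. A minor further point: the paper attributes the failure of the hypotheses under the $0$--$1$ reward to value-function monotonicity (\ref{eq:value-mono}), not to the immediate-rewards condition as you state in your closing caveat.
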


\begin{proof}
    We proceed by induction. Since the myopic and the optimal policy would agree on time $t = H-1$, the inequality holds trivially for the base case. For the induction step, we assume the claimed inequality holds for all $t'$ such that $H >t' > t$. Consider the time horizon $t < t'$ and let $a_1$ represent the item chosen by the optimal policy and $a_2$ the item chosen by the myopic policy. The inequality holds trivially for $a_1 = a_2$. Hence, we consider the case when $a_1 \neq a_2$ and let $y_{a_1}$ and $y_{a_2}$ represent the binary random variables for the item response. For given extended state $\overline{s}$, we use $\overline{s'}$ to denote the next step after selecting an item. Then we have
    \begin{align}
        V_{\pi^*}(\overline{s} | y_{a_1}, y_{a_2}) &= E[ R(\overline{s}, a_1 ,\overline{s'} ) | y_{a_1}] + V_{\pi^*}(\hat{G}_{y_{a_1}}(\overline{s})) \notag \\
        & \leq E[ R(\overline{s}, a_1 ,\overline{s'} ) | y_{a_1}] + E[ R(\overline{s}, a_2,  \overline{s'} ) | y_{a_2}] + V_{\pi^*}(\tilde{G}_{y_{a_2}}(\hat{G}_{y_{a_1}}(\overline{s}))) \label{eq:asumptiona} \\
        & = E[ R(\overline{s}, a_1 ,\overline{s'} ) | y_{a_1}] + E[ R(\overline{s}, a_2,  \overline{s'} ) | y_{a_2}] + V_{\pi^*}(\hat{G}_{y_{a_1}}(\tilde{G}_{y_{a_2}}(\overline{s}))) \notag \\
        & \leq E[ R(\overline{s}, a_1 ,\overline{s'} ) | y_{a_1}]+ E[ R(\overline{s}, a_2,  \overline{s'} ) | y_{a_2}] + V_{\pi^*}(\hat{G}_{y_{a_2}}(\overline{s}))) \label{eq:asumptionb}\\
        & \leq E[ R(\overline{s}, a_1 ,\overline{s'} ) | y_{a_1}] + E[ R(\overline{s}, a_2,  \overline{s'} ) | y_{a_2}] + 2V_{\pi^m}(\hat{G}_{y_{a_2}}(\overline{s}))). \label{eq:induction}
    \end{align}
Note we applied the immediate rewards assumption in equation (\ref{eq:asumptiona}), the value function monotonicity assumption in equation (\ref{eq:asumptionb}), and the induction hypothesis in equation (\ref{eq:induction}). Now, we may take expectation over $y_{a_1}$ and $y_{a_2}$ on both side of inequality and obtain:
\begin{align*}
  V_{\pi^*}(\overline{s}) & \leq E[ R(\overline{s}, a_1 ,\overline{s'} )] + E[ R(\overline{s}, a_2,  \overline{s'} )] + 2V_{\pi^m}(\hat{G}_{y_{a_2}}(\overline{s}))) \\
  &\leq 2(E[ R(\overline{s}, a_2,  \overline{s'} )] + V_{\pi^m}(\hat{G}_{y_{a_2}}(\overline{s}))) ) = 2 V_{\pi^m}(\overline{s}).
\end{align*}
The last inequality is due to the fact the $\pi^m$ is the myopic policy that maximizes the next step reward.
\end{proof}

Theorem \ref{thm:myopic-optimality} can be seen as a limitation of the reinforcement 
learning approach when the reward structure aligns with the stochastic depletion problem, as it implies that a simple myopic policy provides a 2-approximation to the optimal policy. However, it also highlights the advantage of the simple $0$-$1$ reward function as the value function monotonicity is unlikely to hold, rendering the theorem inapplicable. Specifically, the expected posterior variance of $\hat{G}_{y_{a_2}} (\tilde{G}_{y_{a_1}}(\overline{s})))$ is expected to be smaller than that of $\hat{G}_{y_{a_2}}(\overline{s})$. Consequently, fewer items are expected for the posterior $\hat{G}_{y_{a_2}} (\tilde{G}_{y_{a_1}}(\overline{s})))$ to reach the minimum variance threshold, which implies $V_{\pi^*}(\hat{G}_{y_{a_2}} (\tilde{G}_{y_{a_1}}(\overline{s}))) \geq V_{\pi^*} (\hat{G}_{y_{a_2}}(\overline{s}))$.



\section{More Analysis on Simulation} \label{sec:oracle}

\subsection{Factor Loading Matrix Visualization}

We visualize the factor loading matrix in our simulation exercise in Figure \ref{fig:sim_B}. Note all the factor loadings are bounded between $(-3,3)$ and the matrix is relatively sparse: all the items are loaded on the first factor, but are only allowed to load for at most two more factors. Although identification of the factor loading matrix is not a primary concern in CAT, as it is typically treated as fixed, we purposefully imposed a lower-triangular structure to ensure identifiability and create a more realistic simulation setting.

\begin{figure}[t] 
    \centering
    \includegraphics[width=0.7\textwidth, height=0.7\textwidth]{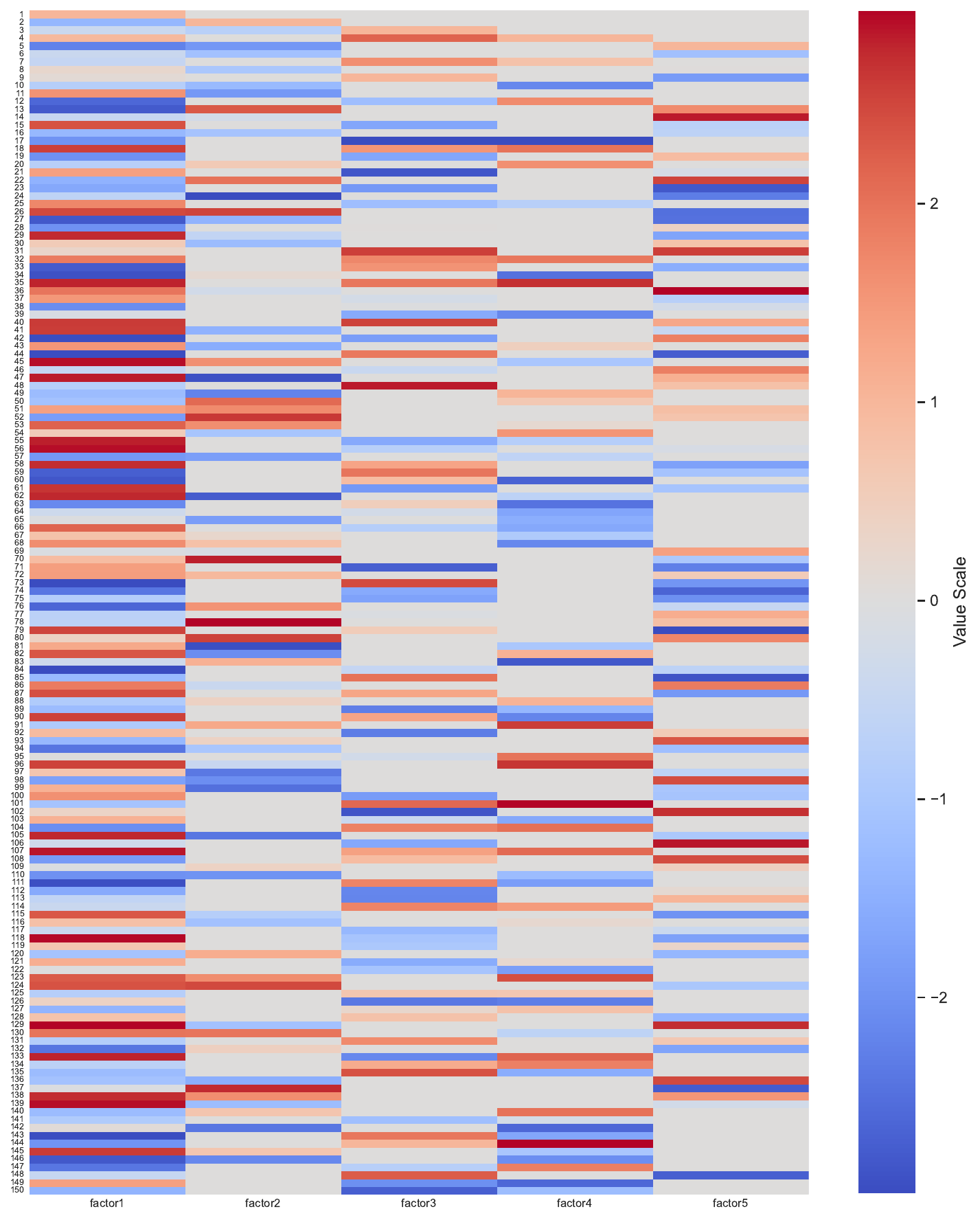}
    \caption{Simulated Factor Loading Matrix}
    \label{fig:sim_B}
\end{figure}

\subsection{Double Q-Learning Training Dynamics} \label{subsec:td}

In Figure \ref{fig:rewards_growth}, we present a metric illustrating the training dynamics of our Q-learning algorithm. The left subplot displays the average episode rewards, computed over every 500 training episodes, showing that rewards began converging around episode 25,000 at approximately \(-23\). The reward is bounded within \([-40,-20]\). Given our definition of the $0-1$ reward, the implication is that a random policy would take $40$ items for the test to converge. After training, it would only take around $20$ items.   

Since we save the primary neural network every 1000 episodes, we select the checkpoint corresponding to the highest average reward for offline deployment in future item selection tasks. In this case, the network at episode 72,000 is chosen, as it achieves the peak average reward of approximately \(-21\). This model selection strategy is consistently applied across all experiments in Section 5.

\begin{figure}[t] 
    \centering
    \includegraphics[width=0.99\textwidth, height=0.55\textwidth]{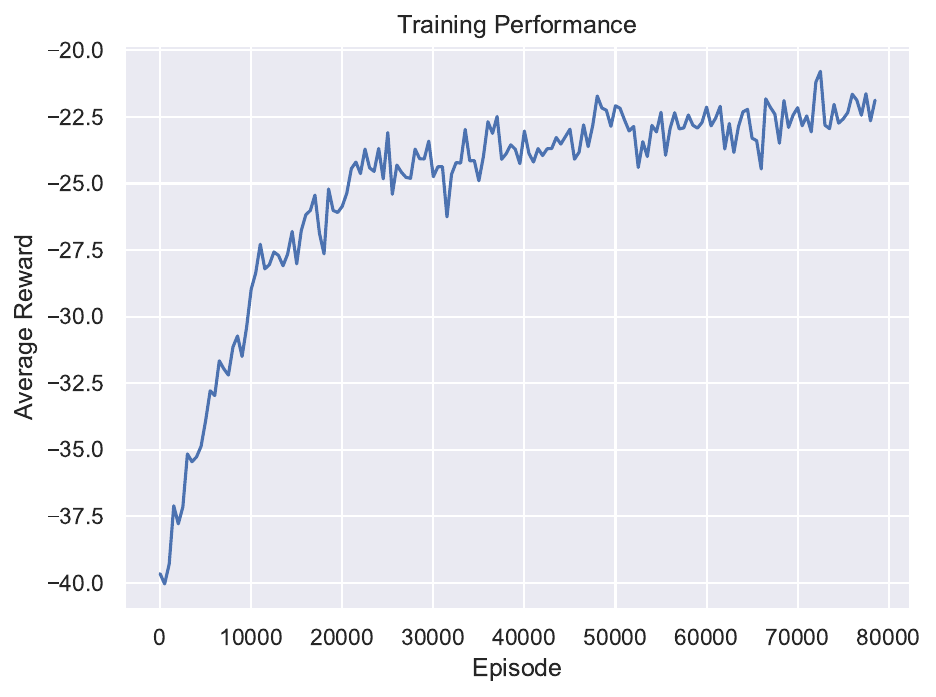}
    \caption{Training Dynamics for the Simulation Experiment in Section 5.1}
    \label{fig:rewards_growth}
\end{figure}

\subsection{Matching the Oracle Posterior} \label{subsec:matching}

Recall the "oracle posterior distribution" of a test taker is defined as the posterior obtained if a given test taker had answered all items in the item bank. To obtain the oracle distributions for each of the $500$ test takers in Section \ref{sec:experiments}, we simulate their item response to all $150$ items in the test bank, and then calculate the posterior distribution using Theorem \ref{thm:3.2}. 

Figures \ref{fig:plotB1}, \ref{fig:plotB2}, and \ref{fig:plotB3} illustrate the decrease of the estimation MSE for the first quartile, median, and  the third quartiles compared to those of the oracle distribution, respectively. The quartiles of each posterior distribution are computed via $1,000$ independent draws. In all these figures, we observe similar superior performances for the MI, Max Var, and the Q-learning methods over the EAP and the Max Pos approaches in estimating the entire oracle distributions. Consistent with the findings in Section \ref{sec:experiments}, the Q-learning approach demonstrates the fastest error reduction rate, especially in the early stages of the test.

\begin{figure}[t]
    \centering
    \includegraphics[width=1.12\textwidth, height=0.55\textwidth]{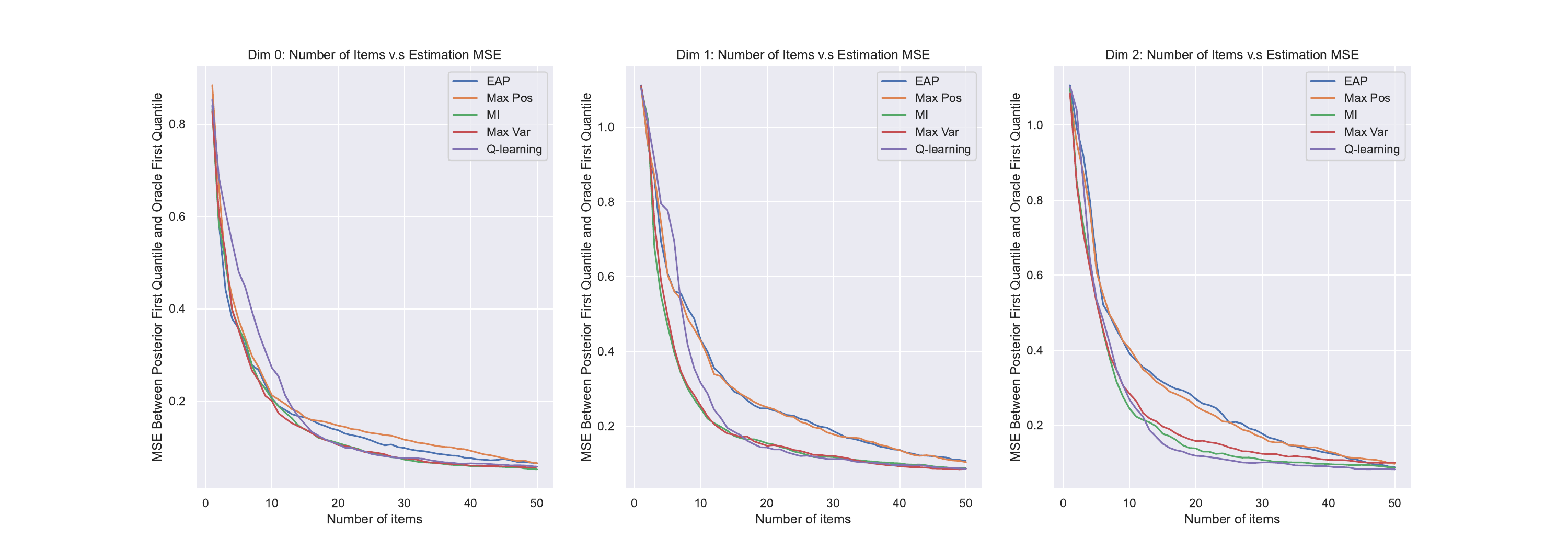}
    \caption{MSE Between Posterior 1st Quantile and Oracle 1st Quantile}
    \label{fig:plotB1}
\end{figure}

\begin{figure}[t]
   \centering
   \includegraphics[width=1\textwidth, height=0.45\textwidth]{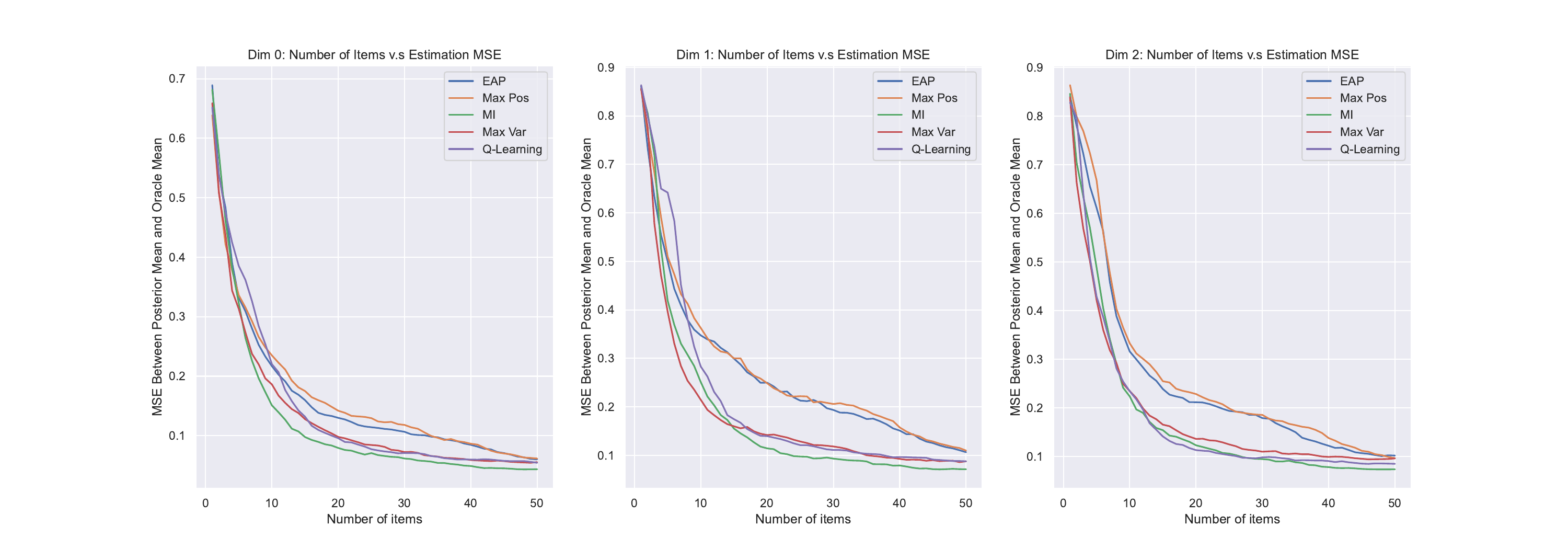}
   \caption{MSE Between Posterior Means and Oracle Posterior Means}
   \label{fig:plot2}
\end{figure}

\begin{figure}[t]
    \centering
    \includegraphics[width=1.12\textwidth, height=0.55\textwidth]{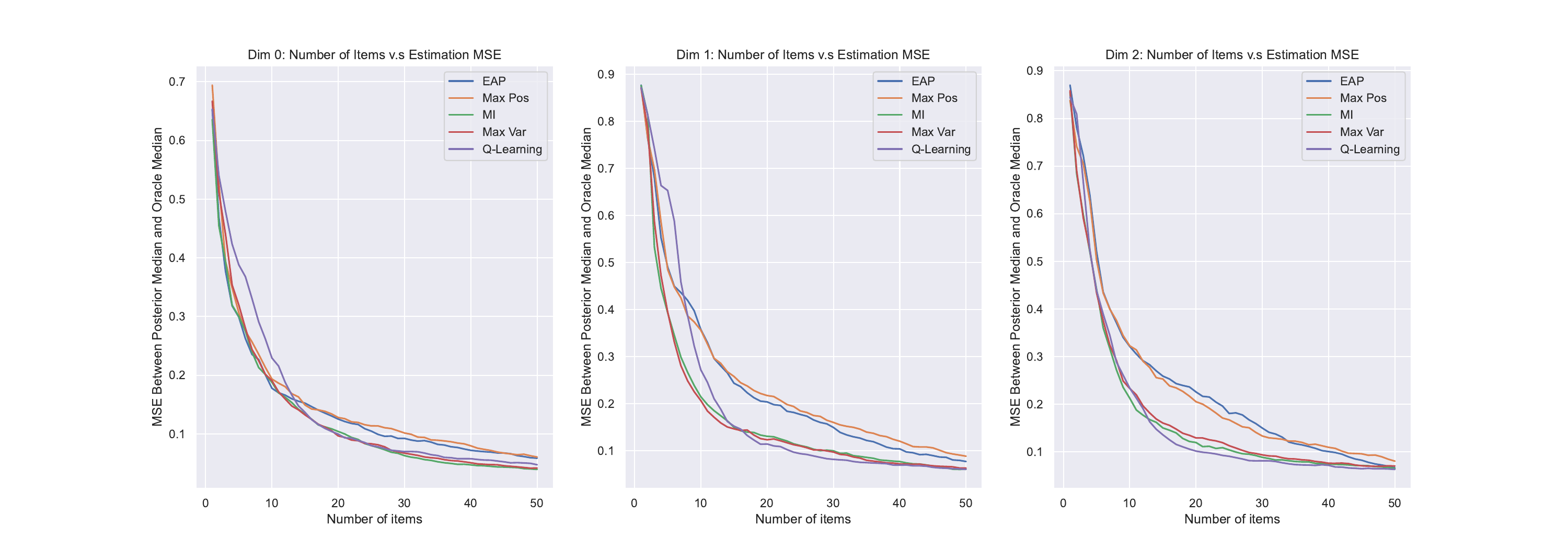}
    \caption{MSE Between Posterior Median and Oracle Median}
    \label{fig:plotB2}
\end{figure}

\begin{figure}[t]
    \centering
    \includegraphics[width=1.12\textwidth, height=0.55\textwidth]{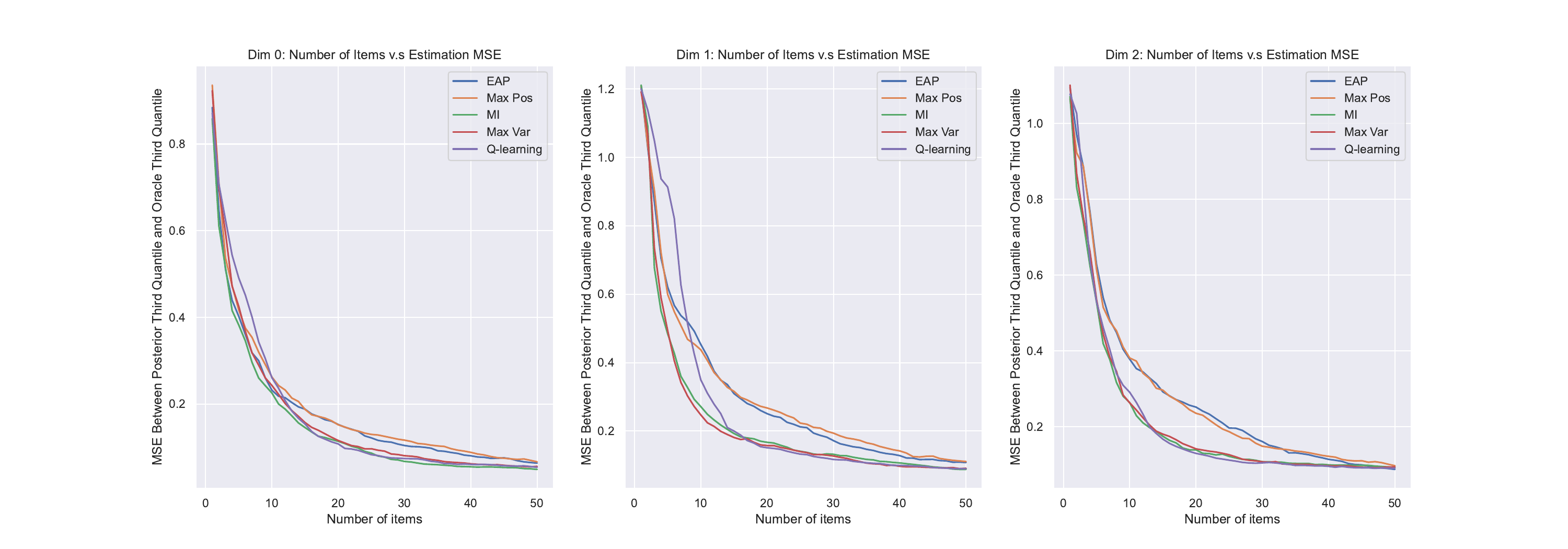}
    \caption{MSE Between Posterior 3rd Quantile and Oracle 3rd Quantile}
    \label{fig:plotB3}
\end{figure}

\section{pCAT-COG Study Additional Visualization} \label{sec:pcat-cog-additional}

\subsection{Exposure rate} \label{subsec:exposure-cat}

In the real-data experiment, we administer the full $57$-item bank in every simulated session to compare the rate of MSE reduction as items accrue. Consequently, item exposure is trivially $100\%$ for all items under all methods, and exposure-rate comparisons are not directly informative in this setting. Therefore, we visualize the distributions of item exposure rate when $T=30$ in Figure \ref{fig:exp-rate-cog}. Since the full item bank has $57$ items, we should expect the exposure rate to be centered around $30/57 \approx 52.6\%$.

\begin{figure}[t]
    \centering
    \includegraphics[width=0.8\textwidth, height=0.4\textwidth]{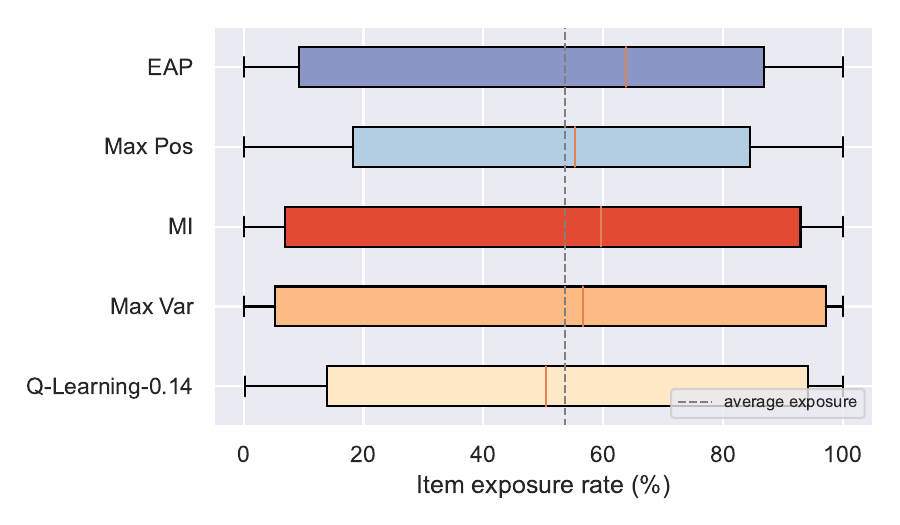}
    \caption{Distributions of Item Exposure Rates at $T=30$: pCAT-COG}
    \label{fig:exp-rate-cog}
\end{figure}

\subsection{Q-learning Training Dynamics}

\section{DESE Educational Assessment} \label{sec:dese_appendix}

We consider an important application in educational assessment, utilizing 2022 Grade 8 student item response data from the Massachusetts Department of Elementary and Secondary Education (DESE). Excluding non-multiple-choice questions, the dataset includes responses to $24$ English items and $34$ math items. For efficiency, we randomly sampled $1,000$ student responses to evaluate each CAT item selection algorithm. 

Given the unknown latent structure, we estimate the factor loading matrix via the PXL-EM algorithm \citep{doi:10.1080/01621459.2025.2476786}, resulting in a $58$ by $4$ factor loading matrix, with visualization and the details of model fitting provided in Appendix \ref{sec:dese_appendix}. When estimating the item parameters, we adopted the idea of dynamic posterior exploration, letting the regularization $\lambda_0$ parameter in the PXL-EM algorithm to gradually increase over the path $\{1, 5, 10, 20, 40, 60, 80, 100\}$. The resulting factor loading matrix $\bm{B}$ is visualized in Figure \ref{fig:plot6}, where dark black regions indicate exact zeros. Notably, all English items load exclusively onto the first factor, representing general ability, while math items can also load onto factors $2$–$4$, reflecting their greater structural complexity. For instance, the final factor ("Math 3") has nonzero loadings only on items $34$, $36$, and $38$, all of which correspond to geometry problems. \footnote{Exam questions are available at: \url{https://www.doe.mass.edu/mcas/2022/release/}.} Interestingly, the authors in \cite{doi:10.1080/01621459.2025.2476786} obtained very similar factor loading matrix estimation for the Grade 10 data as well.

Since all items are heavily loaded on the first primary factor, representing the global latent ability, we focus on the problem of adaptively measuring the first factor while accounting for the underlying multidimensional structure in dimensions $2$-$4$. The left subplot of Figure \ref{fig:plot7} illustrates the cumulative percentage of completed test sessions for the $1,000$ sampled examinees as more items are administered. As shown in the plot, nearly $70\%$ of CAT sessions terminate after only $20$ items under the Q-learning policy. On average over these $1,000$ examinees, Q-learning also required only $20$ items to reach the termination criterion, whereas the second-fastest method (MI) required nearly $26$ items. Beyond its efficiency in variance reduction, Q-learning also achieves a faster decay in MSE when estimating the first factor, reinforcing findings from earlier experiments.
\begin{figure}[t]
    \centering
    \begin{subfigure}{0.48\textwidth}  
        \centering
        \includegraphics[width=\linewidth]{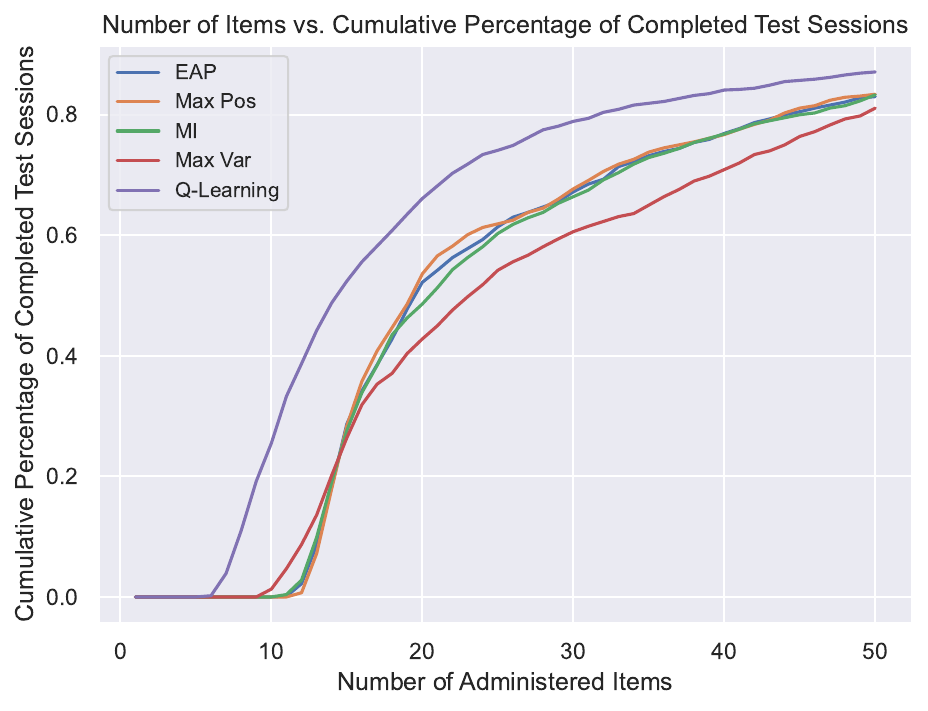}  
        \caption{Test Termination Speed}
    \end{subfigure}
    \hfill  
    \begin{subfigure}{0.5\textwidth}  
        \centering
        \includegraphics[width=\linewidth]{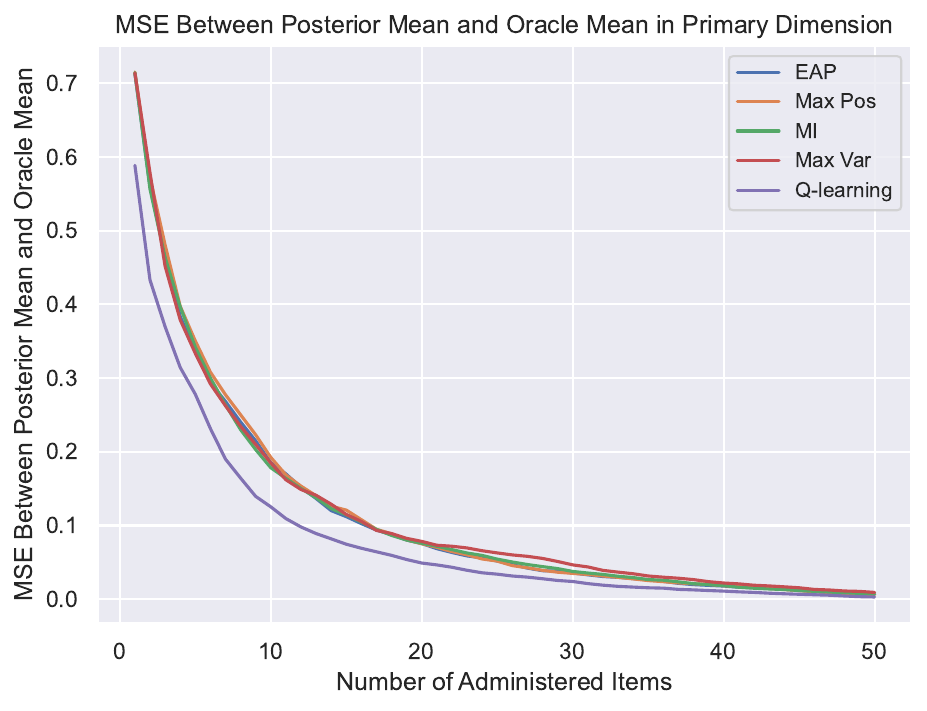}  
        \caption{Test MSE Accuracy}
    \end{subfigure}
    \caption{Educational Assessment: Main Factor Posterior Variance Reduction (Left) and Estimation Accuracy (Right)}
    \label{fig:plot7}
\end{figure}

These results highlight the potential of Q-learning for educational assessments, demonstrating its ability to adapt to various testing environments and to accelerate testing while maintaining accurate estimation of students' multivariate latent traits.

\begin{figure}[t]
    \centering
    \includegraphics[width=0.5\textwidth, height=0.5\textwidth]{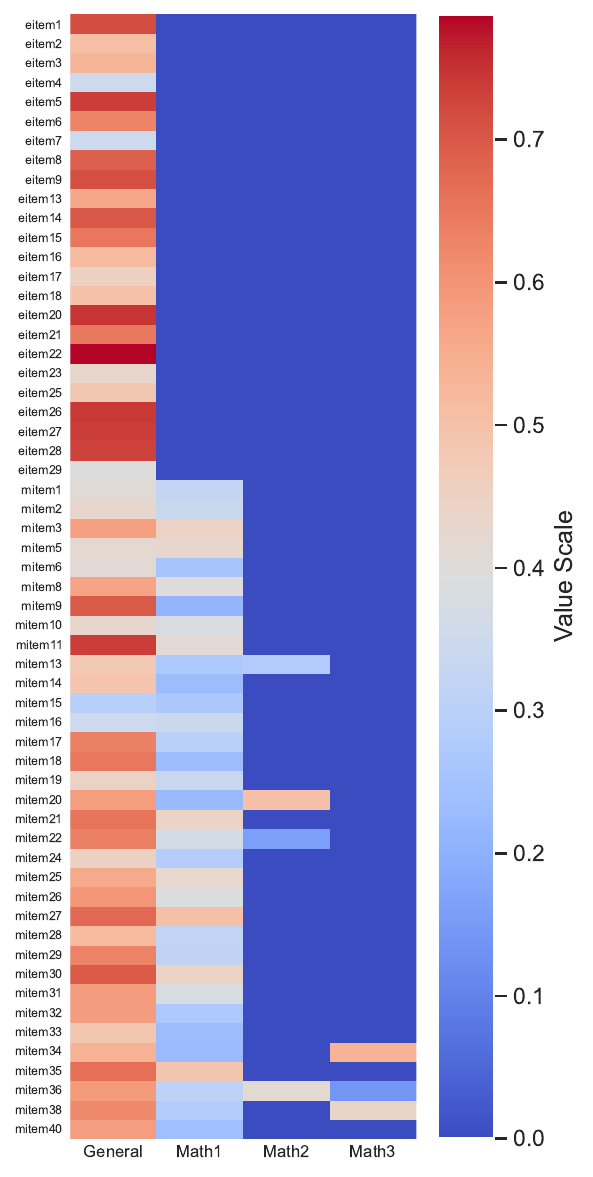}
    \caption{Estimated Factor Loadings for DESE Data}
    \label{fig:plot6}
\end{figure}

\end{document}